\documentclass[11pt,a4paper]{article}
\usepackage{jheppub}
\usepackage{amsmath,amssymb,amsthm,mathtools}
\usepackage{bm}

\newtheorem{theorem}{Theorem}

\usepackage{graphicx}
\usepackage{url}
\usepackage{float}
\usepackage{tikz}

\newcommand{\ints}{\mathbb{Z}}
\newcommand{\grad}{\nabla}
\newcommand{\bfx}{\mathbf{x}}
\newcommand{\Tr}{\operatorname{Tr}}

\newcommand{\ii}{\mathrm{i}}
\newcommand{\dd}{\mathrm{d}}
\newcommand{\jelp}{j_{|\ell|, p}}
\DeclareMathOperator{\sgn}{sgn}
\DeclareMathOperator{\SF}{SF}
\DeclareMathOperator{\spec}{spec}

\title{$h/2e$ Flux Trapping without Superconductivity}

\author{Will Lancer}
\affiliation{Simons Center for Geometry and Physics, Stony Brook University, Stony Brook, NY}
\emailAdd{william.lancer@stonybrook.edu}

\abstract{%
Recently, it was shown by Komargodski and Popov \cite{Komargodski:2026vwn} 
that ordinary metals threaded with an Aharonov--Bohm solenoid exclusively trap 
either $0$ or $h/2e$ flux. The latter value is familiar in the superconducting
setting, yet we find that it holds in ordinary metals as well, with no condensate
or pairing necessary. We provide an analytic explanation for the flux trapping, and 
extend the results to the cases of finite temperature, time-reversal invariant 
potentials, and lattice regularization, where we find that the flux remains trapped at 
$0$ and $h/2e$.
}

\begin{document}
\maketitle

\section{Introduction}
\label{sec:intro}

In the usual picture of a metal pierced by an Aharonov--Bohm 
solenoid \cite{Aharonov:1959ph}, the magnetic flux is considered to 
be a fixed background for the system. In a recent paper by
Komargodski and Popov \cite{Komargodski:2026vwn}, it was
found that this view of the flux misses an interesting dynamical
effect: if we allow the flux to vary, due to the back-reaction of the 
electrons in the system, the flux is dynamically quantized to either 
be $0$ or $h/2e$. Furthermore, a similar analysis found a near-origin 
persistent current that discontinuously jumped as the flux was turned on.

The flux trapping is an energetic, infrared effect. It is controlled 
by the effective potential obtained from integrating out the fermions, 
which depends on the reduced flux,
\begin{align}
    \label{eq:intro-nu}
    & \nu \coloneqq \frac{\Phi}{\Phi_0},
    & 0 \leq \nu < 1.
\end{align}
The effective potential $\Omega(\nu)$ is periodic mod $1$
in $\nu$ due to large gauge invariance \cite{byers-yang}, and is even in $\nu$
due to time-reversal symmetry. These conditions single out---in
the absence of spontaneous symmetry breaking---the flux
values $\nu = 0$ and $\nu = 1/2$. The value chosen by
the system is then energetically determined.

The value of $\nu = 1/2$ is familiar in the superconducting setting,
but here it arises for a different reason. In a superconductor,
the order parameter $\Psi$ carries charge $2e$, and requiring
it to be single-valued quantizes flux in units of $h/2e$. This means that
the period of the theory is itself $h/2e$. This is not the case here. The
period of the theory remains $h/e$, and $\nu = 0$ and $\nu = 1/2$
are singled out for the reasons given above.

Flux trapping fundamentally occurs because of the back-reaction
of the fermions against the reduced AB flux $\nu$. This back-reaction
produces a current which influences the flux $\Phi$, which then
settles into a trapped value. On a finite disk, we can write
the current as the current on an infinite disk plus a correction
term
\begin{align}
    \label{eq:intro-current}
    j_R(r) = j_\infty(r) + \Delta j(r),
\end{align}
where as it turns out
\begin{align}
    \label{eq:intro-current-correction}
    \Delta j(r) = \frac{c_R(\nu)}{2\pi R} \frac{1}{r}.
\end{align}
Here, $R$ is the radius of the disk and $c_R$ is a constant linear in $\nu$. 
This linear correction to the current is enough to show that flux is trapped 
on the disk, as we demonstrate in Section~\ref{sec:disk-flux-trapping}.

The flux-trapping effect persists for finite temperature---it
is merely damped order by order by the Lifshitz--Kosevich \cite{Lifshitz:1956zz} factor
\begin{align}
    D_\beta(r) = \frac{\pi^2 r / \beta}{\sinh(\pi^2 r / \beta)}.
\end{align}

Interactions do generically change the location of the flux trapping,
but as long as the interactions are time-reversal invariant, the flux
values of $\nu = 0$ and $\nu = 1/2$ are still privileged in some sense.

Flux trapping is an IR effect, and thus it should not depend on the details
of the short-distance physics. To test this, we put the theory on the lattice.
We first write the bulk lattice model, without the solenoid, as
\begin{align}
    \label{eq:intro-lat-ham}
    H_0 = \sum_i h_{ii} c_i^\dagger c_i - \sum_{ij} (t_{ij} c_j^\dagger c_i + t_{ij}^* c_i^\dagger c_j),
\end{align}
where $c_j^\dagger$ and $c_j$ are the creation and annihilation
operators, and $t_{ij}$ is the amplitude for a particle to hop from site $i$
to site $j$. We then add the solenoid as a defect. Away from the solenoid the
vector potential is pure gauge, so it enters only through the Peierls phases,
\begin{align}
    \label{eq:intro-peierls}
    \theta_{ij} = \frac{2\pi}{\Phi_0} \int_{i}^j \dd\mathbf{s} \cdot \mathbf{A}(\mathbf{s}),
\end{align}
which multiply the hopping terms $t_{ij}$. This is minimal 
coupling on the lattice. The full Hamiltonian is
$H = H_0 + \Delta H$, with
\begin{align}
    \Delta H = -\sum_{ij} \left[ (e^{\ii\theta_{ij}} - 1)\, t_{ij} c_j^\dagger c_i
    + (e^{-\ii\theta_{ij}} - 1)\, t_{ij}^* c_i^\dagger c_j \right].
\end{align}
Using the gauge freedom $\theta_{ij} \mapsto \theta_{ij} + \zeta_j - \zeta_i$,
we can move all of the phase onto the bonds that cross a single cut running
from the solenoid to the outer boundary. A bond crossing the cut
counterclockwise carries $\theta_{ij} = 2\pi\nu$, and every other bond
carries $\theta_{ij} = 0$. So, $\Delta H$ is supported on a line, and
the only physical contribution of the defect is the holonomy $\nu$ around the solenoid.
In Section~\ref{sec:lattice} we instead work in the symmetric gauge
$\theta_{ij} = \nu \Delta_{ij}$, which is equivalent.

We can then numerically simulate this model on the lattice
and determine whether or not flux trapping is present. 
We simulate eight different values of the temperature, 
and five different values of the chemical potential, as well as 
five different UV regularizations. This gives $200$ simulations.
In every simulation, either $\nu = 0$ or $\nu = 1/2$ flux was chosen 
by the system.

Flux trapping can be tested in the lab. The cylinder is the feasible target,
as the disk requires phase coherence over the length scale of around $1\,\mathrm{cm}$. The cylinder,
however, is entirely on the micron scale. Specifically, a ballistic copper cylinder
with radius $R = 1\,\mu\mathrm{m}$, height $L = 20\,\mu\mathrm{m}$, and temperature $T = 100\,\mathrm{mK}$
is sufficient to trap flux. To allow for dynamical flux trapping, there are constraints
on the inductance of the cylinder. Roughly, the shorter the cylinder, the softer the
inductance has to be to allow for flux trapping. For the previous example,
an effective inductance of $\sim 0.2\,\mathrm{nH}$ or more---which requires
filling the cylinder with a material of relative permeability $\mu_r \gtrsim 10^3$---is
necessary to trap flux.

One might be concerned that we are ignoring higher-order effects,
like the electron--phonon and phonon--phonon coupling in this analysis,
and that this might affect the final result. As expected, it turns out that in a
weakly coupled metal, the higher-order corrections to the current are small compared
to the current correction $\Delta j(r)$ in (\ref{eq:intro-current}).
Electron--phonon coupling yields fractional corrections $\lambda_{\mathrm{ep}} \sim
0.1$ \cite{savrasov}, where $\lambda_{\mathrm{ep}}$ is the dimensionless
electron--phonon coupling. Phonon--phonon contributions have the expansion parameter
$\epsilon_{\mathrm{pp}} \sim k_B \Theta_D / (B V_a)$, with $B$ the bulk modulus,
$V_a$ the atomic volume, and $\Theta_D$ the Debye temperature \cite{allen}.
For a stiff lattice, $\epsilon_{\mathrm{pp}} \sim 10^{-3}$, and so phonon-mediated
response changes by $\lambda_{\mathrm{ep}} \epsilon_{\mathrm{pp}}
\sim 10^{-4}$--$10^{-3}$. Thus, phonon--phonon effects change the 
linear current correction by at most $10^{-3}$, while electron--phonon 
coupling changes its coefficient by $\lambda_{\mathrm{ep}} \sim 10\%$, 
which we estimate below when we discuss the magnetic enclosure. Neither
effect changes the sign of the restoring current, so flux trapping is preserved.

The photon contribution splits into two pieces. The first is the mean field sourced 
by $\langle \mathbf{J} \rangle$, which is the back-reaction that traps the flux in 
the first place. This is already included in our analysis through Amp\`ere's law, 
(\ref{eq:ampere}). We ignore the second contribution in our analysis, which is the 
electron--electron interaction term $\mathbf{J} \cdot \mathbf{J}$.
Its bare coupling is $\alpha v_F/c = \mu_0'/4\pi \sim 10^{-5}$--$10^{-4}$ \cite{loss-martin},
so it changes $\Omega(\nu)$---and thus the restoring-current coefficient---an
amount on the order of $10^{-5}$--$10^{-4}$. This is negligible compared to the
electron--phonon contribution and does not change the flux-trapping result. It should 
also be noted that $\mathbf{J} \cdot \mathbf{J}$ is time-reversal 
invariant, and thus it cannot move the trapped flux from $\nu = 0$ or $\nu = 1/2$.

Such small corrections do not qualitatively change flux trapping unless the
magnetic enclosure induces a large IR enhancement. The enclosure could strengthen
the magnetic field produced by current fluctuations, making the photon-mediated
correction much larger than the bare estimate---we do not expect this to occur
for the copper setup above. The dominant corrections to the
restoring-current coefficient are of order $\sim 10\%$, and the sign
of the restoring current is preserved. The trapping length
on the disk depends on the restoring-current coefficient by an inverse cube
root, so these corrections only mildly shift the scale. For example, in the
effective-mass estimate at fixed $k_F$, $m^*/m_b = 1 + \lambda_{\mathrm{ep}}$
gives $\ell_{\mathrm{eff}}/\ell_M = (1 + \lambda_{\mathrm{ep}})^{1/3}$.
Letting $\lambda_{\mathrm{ep}} = 0.14$ gives a $4.5\%$ increase.
For the cylinder used in the experimental setup, the dependence is an inverse
square root, so $\ell_{z, \mathrm{eff}}/\ell_z =
(1 + \lambda_{\mathrm{ep}})^{1/2}$ gives a $6.8\%$ increase.
This implies that our leading analysis is qualitatively
correct for describing flux trapping in a weakly coupled metal.

The near-origin current is quite different from the flux trapping.
It is a UV effect, and is localized at radii $r \lesssim k_F^{-1} e^{-1/|\nu|}$,
therefore being outside the range of the effective theory of the Fermi surface.
Thus, the near-origin current is non-universal, and depends closely on the
precise details of the UV physics. In the continuum model, the persistent current
is controlled by the spectral information of the circle Dirac operator,
$D(\nu) = p_\theta - \nu$. In Appendix~\ref{app:uv-current}, we explore
these calculations, consider the case of finite temperature, and add radial potentials
to the Hamiltonian.

\section{Trapping Flux on the Cylinder}
\label{sec:cylinder}

Here we detail the general picture we will be working with.
The goal of this section is to analyze flux trapping on the cylinder.
We will first cover some general theorems about flux trapping that apply to
both the cylinder and the disk, then present a simple selection rule for which
value of the flux the cylinder traps, and then extend the results to finite temperature.

We will only deal with annular geometries in this paper, as we must pierce 
the metal with an Aharonov--Bohm solenoid. Specifically, we will be working with the case of a cylinder, 
as imaged in Figure~\ref{fig:just-cyl}. Komargodski and Popov \cite{Komargodski:2026vwn}
presented numerics that show flux trapping at $\nu = 0$ and $\nu = 1/2$
in the case of the disk, and we do not reproduce these considerations here.
Due to the piercing of the geometries by the AB solenoid,
there is naturally a gauge field present, $A$. We also take the
metals to have possibly interacting spinless electrons living on them, $\psi$.
These electrons back-react in response to the induced AB flux,
producing the dynamic flux quantization we see in the examples below.

If our system has a gauge field, $A$, and the space we're working
on is not simply connected, it is natural to consider the holonomy
of the gauge field around the origin,
\begin{align}
    \label{eq:holonomy}
    \Phi = \oint A = \int_0^{2\pi} A_\theta \, \dd\theta.
\end{align}
We will find it simpler to work with the flux measured in units
of the flux quantum, $\Phi_0 = h/e$,
\begin{align}
    & \nu \coloneqq \frac{\Phi}{\Phi_0},
    & 0 \leq \nu < 1.
\end{align}
Since integer holonomies can be gauged away, $\nu$ is defined
mod $1$. In this paper, we will generally find it beneficial to
measure things in terms of the Fermi momentum---thus, we switch
to dimensionless variables, measured as
\begin{align}
    \label{eq:dimensionless-vars}
    x = k_F^{-1} x',
    \hspace{0.25in}
    \hat{\psi} = k_F \hat{\psi}',
    \hspace{0.25in}
    \mathbf{A} = \frac{\hbar k_F}{e} \mathbf{A}',
    \hspace{0.25in}
    \mathbf{J} = \frac{e \hbar k_F^4}{m} \, \mathbf{J}',
    \hspace{0.25in}
    E = \frac{\hbar^2 k_F^2}{2m} E',
\end{align}
where $\mathbf{J}$ is the three-dimensional electric current of the system, and
$\mathbf{A}$ the gauge field. For example, in these units, $E_F' = 1$,
and Amp\`ere's law is
\begin{align}
    \label{eq:ampere}
    \grad' \times \mathbf{B}' = \mu_0' \mathbf{J}',
\end{align}
where $\mu_0' \coloneqq 4\pi \alpha v_F/c$. In a real
metal, $\mu_0' \ll 1$. This corresponds to our gauge field
being very stiff, as $\mu_0' \ll 1$ makes the energy cost
of changing the gauge field very large. In two dimensions,
the dimensionless current is instead
\begin{align}
    \mathbf{J} = \frac{e \hbar k_F^3}{m} \, \mathbf{J}'.
\end{align}
The situation we're in should be clear from context. From now on, 
we will drop the primes on our variables.

Physically, we are working with two-dimensional objects in three-dimensional
space. We choose to encase our setups in a high-permeability material---this
kills off any tangential $\mathbf{B}$-fields, thus fixing 
$\hat{\mathbf{n}} \times \mathbf{B} = \mathbf{0}$ on the boundary. We will work 
exclusively with this setup in this paper.

In the example of the disk below, we aim to show that
the value of $\nu$ is dynamically quantized to be either $0$ or $1/2$.
This will basically be a consequence of the energy
of the system being periodic mod $1$ along with time-reversal
symmetry, as we will show. We will also show that the flux-trapping effect is 
robust to finite temperature.

\begin{figure}[H]
    \centering
    \begin{tikzpicture}[>=stealth, line cap=round, line join=round,
        nlab/.style={font=\footnotesize}]
        \def\rx{1.7}   
        \def\ry{0.5}   
        \def\h{3.6}    
        \def\ym{1.7}   
        \def\tubeout{0.78}  
        \def\arrowtop{1.18} 

        \fill[red!18] (-0.08,-1.0) rectangle (0.08,\h);
        \draw[red!55!black, line width=0.4pt] (-0.08,-1.0) -- (-0.08,\h);
        \draw[red!55!black, line width=0.4pt] ( 0.08,-1.0) -- ( 0.08,\h);

        \begin{scope}[opacity=0.60]
            \shade[left color=blue!22, right color=blue!22, middle color=blue!4]
                (-\rx,\h) arc (180:360:{\rx} and {\ry}) -- (\rx,0)
                arc (360:180:{\rx} and {\ry}) -- cycle;
        \end{scope}
        \draw[gray!55, densely dashed] (-\rx,0) arc (180:96:{\rx} and {\ry});
        \draw[gray!55, densely dashed] ({\rx*cos(84)},{\ry*sin(84)})
            arc (84:0:{\rx} and {\ry});
        \draw[blue!45!black] (-\rx,0) arc (180:360:{\rx} and {\ry});
        \draw[blue!45!black] (-\rx,0) -- (-\rx,\h);
        \draw[blue!45!black] ( \rx,0) -- ( \rx,\h);
        \begin{scope}[opacity=0.60]
            \shade[top color=blue!18, bottom color=blue!6] (0,\h) ellipse ({\rx} and {\ry});
        \end{scope}
        \draw[blue!45!black] (0,\h) ellipse ({\rx} and {\ry});

        \fill[red!18] (-0.08,\h) rectangle (0.08,\h+\tubeout);
        \draw[red!55!black, line width=0.4pt] (-0.08,\h) -- (-0.08,\h+\tubeout);
        \draw[red!55!black, line width=0.4pt] ( 0.08,\h) -- ( 0.08,\h+\tubeout);

        \draw[blue!40, densely dashed] (-\rx,\ym) arc (180:96:{\rx} and {\ry});
        \draw[blue!40, densely dashed] ({\rx*cos(84)},{\ym+\ry*sin(84)})
            arc (84:0:{\rx} and {\ry});
        \draw[blue!55!black, thick, ->] (-\rx,\ym) arc (180:300:{\rx} and {\ry});
        \draw[blue!55!black, thick] ({\rx*cos(300)},{\ym+\ry*sin(300)})
            arc (300:360:{\rx} and {\ry});
        \node[nlab, blue!55!black, anchor=north] at (0.98,1.05) {$\theta$};
        \node[nlab, anchor=east] at (-\rx-0.18,\ym)
            {$\displaystyle \Phi = \oint A_\theta \, \dd\theta$};
        \draw[gray!55, line width=0.3pt] (-\rx-0.16,\ym) -- (-\rx,\ym);

        \draw[->, red!65!black, line width=1.0pt] (0,\h+0.12) -- (0,\h+\arrowtop);
        \node[nlab, red!60!black, anchor=west] at (0.12,\h+0.7) {$\Phi$};
        \node[nlab, anchor=south] at (0,\h+\arrowtop+0.03) {$z$};

        \fill[blue!45!black] (0,\h) circle (1.1pt);
        \draw[->, thick] (0,\h) -- ({\rx*cos(320)},{\h+\ry*sin(320)});
        \node[nlab, anchor=south] at (0.62,\h-0.06) {$R$};

        \draw[gray!55, line width=0.3pt] (\rx,0) -- (\rx+0.62,0);
        \draw[gray!55, line width=0.3pt] (\rx,\h) -- (\rx+0.62,\h);
        \draw[<->] (\rx+0.5,0) -- (\rx+0.5,\h);
        \node[nlab, anchor=west] at (\rx+0.5,\h/2) {$L$};

        \node[nlab, gray!55!black, anchor=west] at (0.5,-0.78) {AB solenoid};
        \draw[gray!55, line width=0.3pt] (0.48,-0.78) -- (0.08,-0.62);
    \end{tikzpicture}
    \caption{The cylinder setup. We thread a thin AB solenoid through
    a cylinder of radius $R$ and length $L$, with $L \gg R$. On the cylinder live spinless
    electrons, $\psi$. The solenoid produces a flux $\Phi$ felt by the electrons, given by
    the holonomy of its gauge field $\mathbf{A}$, $\Phi = \oint A = \int_0^{2\pi} A_\theta \, \dd\theta$.}
    \label{fig:just-cyl}
\end{figure}

\subsection{Trapped Flux and Time-Reversal Invariance}
\label{sec:trapped-flux}

We now state and prove some general theorems that tell us that
flux trapping at $\nu = 0$ and $\nu = 1/2$ is a priori expected
for time-reversal invariant systems. In the case of spontaneous symmetry 
breaking, the ground state of the system sits at one of the values in the 
pair $\{ \nu_*, - \nu_* \}$ with $\nu_*$ taken mod $1$.

\begin{theorem}
    Let $V(x)$ be time-reversal invariant, and suppose the ground-state energy
    $E(\nu)$ of $H(\nu) + V(x)$ is differentiable at $\nu \in \{0, 1/2\}$. 
    Then $\nu = 0$ and $\nu = 1/2$ are critical points of $E(\nu)$.
\end{theorem}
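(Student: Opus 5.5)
The plan is to establish two symmetries of $E(\nu)$ and then read off the critical points. The symmetries are periodicity, $E(\nu+1)=E(\nu)$, and evenness, $E(-\nu)=E(\nu)$. Together they give $E(1-\nu)=E(\nu)$, so $E$ is symmetric about both $\nu=0$ and $\nu=1/2$.

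\emph{Periodicity.} This is large gauge invariance. Conjugating by the single-valued unitary $U=e^{\ii\theta}$ (acting as $e^{\ii\sum_k\theta_k}$ on many-body states) shifts $p_\theta - \nu \mapsto p_\theta - (\nu-1)$, which gives $U H(\nu) U^{-1} = H(\nu-1)$. Here $H(\nu)$ is the minimally coupled Hamiltonian, with $\nu$ entering only through $D(\nu)=p_\theta-\nu$. The potential $V(x)$ is a multiplication operator, so it commutes with $U$. Unitarily equivalent operators have the same spectrum, so the ground-state energies coincide.

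\emph{Evenness.} This uses time-reversal. Let $\Theta$ be antiunitary complex conjugation in the position representation (for spinless electrons). It sends $p_\theta \mapsto -p_\theta$, hence $\Theta H(\nu)\Theta^{-1} = H(-\nu)$. Time-reversal invariance of $V$ means $\Theta V \Theta^{-1}=V$. An antiunitary operator preserves the spectrum, because $H\psi=E\psi$ with $E$ real implies $H'(\Theta\psi)=E\,\Theta\psi$. Therefore $E(-\nu)=E(\nu)$. The same argument covers interacting $V$ if it is a real, time-reversal-even operator.

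\emph{Conclusion.} Suppose $E$ is differentiable at $\nu_0\in\{0,1/2\}$. Both $g(\nu)=E(2\nu_0-\nu)$ and $E(\nu)$ agree identically: this is evenness for $\nu_0=0$, and evenness combined with periodicity for $\nu_0=1/2$. Differentiating at $\nu_0$ gives $E'(\nu_0)=-E'(\nu_0)$, so $E'(\nu_0)=0$. Equivalently, the difference quotients from the left and from the right are negatives of each other, and differentiability forces them to be equal. The differentiability hypothesis is exactly what rules out a symmetric cusp, such as a level crossing at $\nu=1/2$.

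\emph{Main obstacle.} The hardest step is making the unitary and antiunitary equivalences rigorous on the actual domain. For the disk or cylinder, the conjugation by $U$ has to preserve the boundary conditions and the singular domain near the solenoid. A general AB-type self-adjoint extension at $r=0$ need not be mapped to the corresponding one at $\nu-1$ or $-\nu$. I would address this by fixing the boundary conditions as the regular (Friedrichs) or Dirichlet ones, which are gauge-invariant and real, and checking that $U$ and $\Theta$ preserve those domains. On the lattice the check is immediate, since the Peierls phases transform as $\nu\mapsto\nu-1$ and $\nu\mapsto-\nu$ under the same maps.
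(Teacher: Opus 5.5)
Your proof is correct and follows the same route as the paper: Byers--Yang periodicity from the large gauge transformation, evenness from $\mathcal{T} H(\nu)\mathcal{T}^{-1} = H(-\nu)$, and then the symmetry of $E$ about both $0$ and $1/2$, which together with differentiability forces $E'=0$ there. You spell out that last step more explicitly than the paper does, and your attention to the self-adjoint domain at the solenoid is extra care the paper does not take. One harmless sign slip: with $p_\theta=-\ii\partial_\theta$ one has $U p_\theta U^{-1} = p_\theta - 1$ for $U=e^{\ii\theta}$, so the conjugation gives $H(\nu+1)$ rather than $H(\nu-1)$, which does not affect the spectral equality.
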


\begin{proof}
    First, consider the fact that the energy---and thus the effective potential 
    from integrating out the fermions---is periodic mod $1$ in $\nu$. This follows 
    from the fact that $\nu$ itself is periodic mod $1$, as integer holonomies may be 
    gauged away. So, if $E$ denotes the energy of our system,
    \begin{align}
        \label{eq:byers-yang}
        E(\nu) = E(\nu + 1).
    \end{align}
    In superconductors, this fact is known as Byers--Yang invariance \cite{byers-yang}.
    In ordinary metals, $E(\nu) = E(\nu + 1)$ follows from the
    fact that the spectrum of the Hamiltonian remains fixed after conjugating
    with the large gauge transformation $U = \exp(\ii \sum_j \theta_j n_j)$,
    where $\theta_j$ is the polar angle of site $j$ and $n_j$ its number operator.
    Now, we claim that if the Hamiltonian of our system, $H$, is time-reversal
    invariant, then $\nu = 0$ and $\nu = 1/2$ are critical points of the energy.
    This follows from the fact that time-reversal is an anti-unitary symmetry,
    and so it acts like
    \begin{align}
        & \mathcal{T} p \mathcal{T}^{-1} = - p,
        & \mathcal{T} V(x) \mathcal{T}^{-1} = V(x).
    \end{align}
    In the regime of minimal coupling, the flux enters the Hamiltonian through a term like
    $(p_\theta - \nu)^2$. Acting on such a term with $\mathcal{T}$ gives
    \begin{align}
        \mathcal{T} (p_\theta - \nu)^2 \mathcal{T}^{-1} = (-p_\theta - \nu)^2 = (p_\theta + \nu)^2,
    \end{align}
    and so
    \begin{align}
        \label{eq:H-is-even}
        \mathcal{T} H(\nu) \mathcal{T}^{-1} = H(-\nu).
    \end{align}
    Combining this with (\ref{eq:byers-yang}) gives the conclusion.
\end{proof}

Note that this conclusion only holds because $E(\nu)$ is differentiable
at the points $\nu = 0$ and $\nu = 1/2$---differentiability fails in the case
of the persistent current in Appendix~\ref{app:uv-current}.

We now show that the only way for a system to sit at a point 
besides $\nu \in \{ 0, 1/2 \}$ is for $\mathcal{T}$-symmetry 
to be spontaneously broken.

\begin{theorem}
    \label{thm:0-or-1/2}
    For a time-reversal invariant Hamiltonian depending on an external flux
    $\nu$, one of the two possibilities holds:
    \begin{enumerate}
        \item The minima of the system lie at $\nu = 0$ or $\nu = 1/2$.
        \item $\mathcal{T}$-symmetry is spontaneously broken, and the minimum
        of the system is one of the values in the pair $\{ \nu_*, - \nu_* \}$ 
        with $\nu_*$ taken mod $1$.
    \end{enumerate}
\end{theorem}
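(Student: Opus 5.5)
The plan is to study the set of minimizers of $E(\nu)$ and use its symmetries. Viewed on the circle $\mathbb{R}/\mathbb{Z}$, this set should be invariant under the reflection $\nu \mapsto -\nu$. The fixed points of that reflection are exactly $\{0, 1/2\}$, so the dichotomy in Theorem~\ref{thm:0-or-1/2} comes down to whether the selected minimum is a fixed point of the reflection or not.

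\textbf{Step 1: symmetries of $E$.} Periodicity $E(\nu)=E(\nu+1)$ is (\ref{eq:byers-yang}). Time-reversal gives $\mathcal{T}H(\nu)\mathcal{T}^{-1}=H(-\nu)$, as in (\ref{eq:H-is-even}). Since $\mathcal{T}$ is anti-unitary it maps eigenvectors to eigenvectors with the same real eigenvalue. Hence $\spec H(\nu)=\spec H(-\nu)$, and in particular $E(\nu)=E(-\nu)$; the same holds for the finite-temperature free energy, since $\Tr e^{-\beta H}$ is invariant. Combining evenness with periodicity, $E(\nu)=E(-\nu \bmod 1)$. Therefore the map $\sigma:\nu\mapsto -\nu \bmod 1$ preserves $E$ and permutes its global minima. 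I will not need differentiability, because the argument is about minima rather than critical points.

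\textbf{Step 2: fixed points of $\sigma$.} A point is fixed by $\sigma$ exactly when $\nu\equiv-\nu \pmod 1$, i.e.\ $2\nu\in\mathbb{Z}$, so $\nu\in\{0,1/2\}$. Now let $\nu_*$ be a minimum, which exists because $E$ is lower semicontinuous on the compact circle; I take this as the natural regularity assumption. There are two cases. If $\nu_*\in\{0,1/2\}$, we are in case 1. Otherwise $\sigma(\nu_*)=-\nu_*\neq\nu_*$ is also a minimum, so the minima come in the pair $\{\nu_*,-\nu_*\}$.

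\textbf{Step 3: identifying case 2 with spontaneous $\mathcal{T}$-breaking.} With the flux dynamical, the full system's Hilbert space includes $\nu$, and $\mathcal{T}$ acts on it by sending $\nu\mapsto-\nu$. If the state sits at $\nu_*\notin\{0,1/2\}$, then the expectation value of the holonomy, equivalently of the persistent current $\partial_\nu E$ or the flux itself, is $\mathcal{T}$-odd and nonzero mod $1$. It therefore serves as an order parameter: the ground-state selection is not $\mathcal{T}$-invariant even though $H$ is, which is spontaneous breaking by definition. The two degenerate vacua $\pm\nu_*$ are exchanged by $\mathcal{T}$. Conversely, if $\mathcal{T}$ is unbroken, the selected $\nu$ must be $\sigma$-fixed, which forces case 1.

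\textbf{Main obstacle.} Step 3 is where I expect the difficulty: making the phrase ``spontaneously broken'' precise rather than tautological. I would handle it by defining $\mathcal{T}$-breaking operationally as the selected vacuum not being invariant under $\sigma$, and, if needed, arguing in the thermodynamic limit of large inductance or stiff gauge field, where tunneling between $\pm\nu_*$ is suppressed and a genuine superselection occurs. The other subtlety is that $\mathcal{T}$ maps $\nu$ to $-\nu$ rather than fixing it. That is harmless once everything is phrased in terms of $\sigma$ on the circle.
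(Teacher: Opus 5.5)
Your proposal is correct and follows essentially the paper's argument: the set of minimizers is invariant under $\nu \mapsto -\nu \bmod 1$, the fixed points of this reflection are exactly $0$ and $1/2$, and a selected minimum that is not a fixed point is read as spontaneous $\mathcal{T}$-breaking. Your extra remarks on existence of a minimum and on not needing differentiability are harmless additions that the paper leaves implicit, but there is one small slip in Step 3: at a minimum where $E$ is differentiable, $\partial_\nu E(\nu_*) = 0$, so the persistent current is not a valid order parameter there, and you should rely on the holonomy itself (e.g.\ $\sin 2\pi\nu_* \neq 0$) instead.
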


\begin{proof}
    Let $M$ be the set of minima of the system. Since the energy is even, $M = - M$.
    Thus, if $\nu_*$ is a minimum of $E(\nu)$, then $- \nu_*$ is as well. So the minima
    of $E$ appear in pairs, $\{ \nu_*, - \nu_* \}$. These pairs fall into two categories:
    either $\nu_*$ is fixed by the map $\nu_* \xrightarrow{\mathcal{T}} - \nu_*$, or
    it's not. The fixed points of this map are $\nu_* \in \{ 0, 1/2 \}$, and thus
    these minima are invariant under time-reversal symmetry. If the point $\nu_*$
    is not fixed, then time-reversal symmetry is spontaneously broken, as the ground
    state of the system is not $\mathcal{T}$-invariant, but the full theory is.
\end{proof}

So we can see that if we add interactions to the system, either the minima
stay at $\{ 0, 1/2 \}$ or else $\mathcal{T}$-symmetry must be broken.
Is there a way to know which minima will be chosen prima facie? The answer
to this is unknown, but a conjecture by Leggett states that this only depends on the
parity of the particle number of the system \cite{leggett}. For
spinless fermions in one dimension, this result has been proven by Loss \cite{loss}.

\subsection{Zero Temperature}

\label{sec:zero-temp-cyl}

We work on the geometry of the cylinder, as pictured in
Figure~\ref{fig:just-cyl}. The system is axially symmetric, and thus
$\partial_\theta \mathbf{A}(\theta, z) = \mathbf{0}$. To account 
for the back-reaction effects of the fermions, we consider the 
energy on the cylinder:
\begin{align}
    \label{eq:cyl-energy}
    E[A_\theta(z), \psi] = \int R \, \dd\theta \dd z \, \left( \frac{1}{\mu_0 R^2} (\partial_z A_\theta(z))^2 + \left\langle p_z^2 + \frac{1}{R^2}(p_\theta - A_\theta)^2 - 1 \right\rangle \right),
\end{align}
where the $-1$ just means that we are measuring the energy of the fermions
relative to the Fermi energy $E_F$. Minimizing (\ref{eq:cyl-energy}) with
respect to $\psi^*$ gives the Schr\"odinger equation for $\psi$,
\begin{align}
    \label{eq:cyl-schro}
    \left( - \partial_z^2 + \frac{(p_\theta - A_\theta)^2}{R^2} \right) \psi = E \psi.
\end{align}
Since there is axial symmetry, $\psi(z, \theta) = e^{\ii\ell \theta} \psi_\ell$.
Additionally, with $A_\theta(z) = \nu$, there is translational symmetry
in the $z$-direction, so for a cylinder of length $L$, we have that
\begin{align}
    \label{eq:cyl-psi}
    \psi(z, \theta) = \frac{1}{\sqrt{2\pi R L}} e^{\ii\ell \theta} e^{\ii kz}.
\end{align}
Inputting this into (\ref{eq:cyl-schro}) gives
\begin{align}
    \left( k^2 + \frac{(\ell - \nu)^2}{R^2} \right) \psi = E_{\ell, k} \psi,
\end{align}
and so the energy of the fermions is
\begin{align}
    E_{\ell, k} = k^2 + \frac{(\ell - \nu)^2}{R^2}.
\end{align}
We'd like to integrate out the fermions to determine the location of the 
trapped flux, $\nu$. Doing this just amounts to minimizing
the grand canonical potential at fixed values of the gauge field,
\begin{align}
    E_{\rm eff}[A_\theta(z) = \nu] = \min_{\psi}(E - \mu N).
\end{align}
We will work at zero temperature in this section, so $\mu = E_F$.
In Section~\ref{sec:finite-temp-cyl}, we will let $\mu$ vary. We will find that this
does not change the location of the flux trapping from the set $\nu \in \{ 0, 1/2 \}$. 
At zero temperature, integrating out $\psi$ amounts to filling the Fermi sea.
This means that we must sum over all $\ell$ and $k$ such that
$E_{\ell, k} \leq 1$, so
\begin{align*}
    k^2 + \frac{(\ell - \nu)^2}{R^2} \leq 1.
\end{align*} 
Define the cutoff momentum,
\begin{align*}
    k_* \coloneqq \sqrt{1 - \frac{(\ell - \nu)^2}{R^2}}.
\end{align*}
This is real if and only if $|\ell - \nu| \leq R$. Filling
the Fermi sea gives an energy density, $\mathcal{E}_{\psi}$,
\begin{align}
    \label{eq:E-psi}
    \mathcal{E}_{\psi}(\nu) = \frac{1}{L} E_{\rm eff}[A_\theta(z) = \nu] = \sum_{|\ell - \nu| \leq R} \int_{-k_*}^{k_*} \frac{\dd k}{2\pi} (k^2 - k_*^2) = - \frac{2}{3\pi} \sum_{|\ell - \nu| \leq R} k_*^3.
\end{align}
So, we can define the effective potential
\begin{align}
    \label{eq:omega-eff}
    \Omega(\nu) \coloneqq - \mathcal{E}_{\psi}(\nu) = \frac{2}{3\pi} \sum_{|\ell - \nu| \leq R} \left(1 - \frac{(\ell - \nu)^2}{R^2}\right)^{3/2}.
\end{align}
There are two important features of $\Omega(\nu)$:
\begin{enumerate}
    \item It is symmetric under shifts $\nu \to \nu + 1$,
    as we can re-index $\ell \to \ell - 1$. Practically,
    this occurs because only the combination $|\ell - \nu|$
    appears in the summation, and thus is translationally invariant
    over $\ints$. This is also obvious from the fact that integer
    holonomies can be gauged away.
    \item It is symmetric under $\nu, \ell \to -\nu, -\ell$. This immediately
    singles out two important values of $\nu$---$0$ and $1/2$---as these are
    fixed points of this operation. The function $\Omega(\nu)$ is even around
    these points, $\Omega(\nu) = \Omega(-\nu)$ and $\Omega(1/2 + \nu) = \Omega(1/2 - \nu)$,
    and thus these are also critical points of $\Omega$. Note that this
    conclusion only follows because $\Omega$ is differentiable at these 
    points---we will see that this is not the case for the infinite
    disk, and thus we cannot draw the same conclusion there.
\end{enumerate}

We can understand (\ref{eq:omega-eff}) a bit better by summing its
Fourier dual instead. Consider the function,
\begin{align*}
    f(x) = \begin{cases}
        \frac{2}{3\pi} \left( 1 - \frac{x^2}{R^2} \right)^{3/2} & |x| \leq R\\
        0 & {\rm else}
    \end{cases}.
\end{align*}
Then, our sum (\ref{eq:omega-eff}) turns into
\begin{align}
    \Omega(\nu) = \frac{2}{3\pi} \sum_{|\ell - \nu| \leq R} \left(1 - \frac{(\ell - \nu)^2}{R^2}\right)^{3/2} = \sum_{\ell \in \ints} f(\ell - \nu).
\end{align}
We can then rewrite this using the Poisson summation formula,
which equates the sum of a function over the integers with
the sum of its Fourier transform,
\begin{align}
    \label{eq:poisson-summation}
    & \sum_{\ell \in \ints} f(\ell - \nu) = \sum_{m \in \ints} \widetilde{f}(m) e^{-2\pi \ii m \nu},
    & \widetilde{f}(m) = \int_{-\infty}^{\infty} \dd x \, f(x) e^{-2\pi \ii m x}.
\end{align}
Rewriting (\ref{eq:omega-eff}) using (\ref{eq:poisson-summation}) gives,
to first order,
\begin{align}
    \label{eq:omega-poisson-sum}
    \Omega(\nu) \approx \frac{R}{4} + \frac{J_2(2\pi R)}{\pi^2 R} \cos(2\pi \nu),
\end{align}
where we used the fact that $\widetilde{f}(m)$ is real and even,
and thus (\ref{eq:poisson-summation}) is a cosine series. Each
term in the $m$-series above goes like $m^{-5/2}$, and thus
the first-order approximation is a good one\footnote{
    One might worry that near the zeros of $J_2(2\pi R)$, the $m = 2$ term dominates
    the $\nu$-dependent part of the sum, and thus introduces a new maximal $\nu$-value
    besides $0$ or $1/2$. This is false. Near $R$ such that $J_2(2\pi R) \approx 0$,
    $J_2(4\pi R) > 0$, and so $\cos(4\pi \nu)$ selects for $\nu = 0$ or $\nu = 1/2$
    yet again, \emph{reinforcing} the trapped flux values!
}.

There are two important features in (\ref{eq:omega-poisson-sum}):
\begin{enumerate}
    \item We may immediately conclude that $\nu = 0$ and $\nu = 1/2$ extremize
    $\Omega(\nu)$, depending on the sign of $J_2(2\pi R)$. If $J_2(2\pi R) > 0$,
    then $\nu = 0$ is chosen---if $J_2(2\pi R) < 0$, then $\nu = 1/2$ is chosen.
    \item The second term in (\ref{eq:omega-poisson-sum}) goes like $R^{-5/2}$
    relative to the first term, and thus in the $R \to \infty$ limit, we should
    expect no preferred value of flux trapping. One way to think about this is that
    as the radius of the cylinder becomes very large, the spacing in the angular momentum
    levels becomes very small, $\Delta \ell \sim 1/R$. Thus, the sum over $\ell$ approximately
    becomes an integral over $\ell$, and so shifting by $\nu$ becomes a trivial change of variables.
\end{enumerate}
We can check the prediction from (\ref{eq:omega-poisson-sum}) numerically
to confirm our results. Comparing with the numerical simulations in \cite{Komargodski:2026vwn}
gives perfect agreement, where $\nu_\infty$ is the value of $\nu$ chosen
by the system:
\begin{table}[H]
    \centering
    \begin{tabular}{|c|c|c|}
        \hline
         & Figure $4$ $\nu_\infty$ value & Predicted $\nu_\infty$ value\\
        \hline
        $R = 1.6$ & $\nu = 0$ & $\nu = 0$ \\
        $R = 1.7$ & $\nu = 0$ & $\nu = 0$ \\
        $R = 1.8$ & $\nu = 0$ & $\nu = 0$ \\
        $R = 1.9$ & $\nu = 1/2$ & $\nu = 1/2$ \\
        $R = 2.0$ & $\nu = 1/2$ & $\nu = 1/2$ \\
        $R = 0.125$ & $\nu = 0$ & $\nu = 0$ \\
        \hline
    \end{tabular}
    \caption{The flux $\nu_\infty$ that maximizes $\Omega(\nu)$, read off from Figure~4
    of \cite{Komargodski:2026vwn} and predicted from the sign of $J_2(2\pi R)$
    in (\ref{eq:omega-poisson-sum}).}
\end{table}

\subsection{Finite Temperature}
\label{sec:finite-temp-cyl}

To extend the considerations of the previous sections to finite temperature,
we first must modify our computation of the grand canonical potential in (\ref{eq:E-psi}).
At finite $T$, we instead use
\begin{align}
    \mathcal{Z} = \Tr\left( e^{-\beta(E - \mu N)}\right).
\end{align}
In the occupation number basis, this becomes
\begin{align}
    \mathcal{Z} = \prod_{\ell \in \ints} \prod_{k} \left( 1 + e^{-\beta(E_{\ell, k} - \mu)}\right).
\end{align}
Using
\begin{align}
    \Omega_{\rm gc} = -\beta^{-1} \log(\mathcal{Z})
\end{align}
gives
\begin{align}
    \label{eq:omega-finite-T}
    \Omega_\beta(\nu) \coloneqq - \Omega_{\rm gc} = \sum_{\ell \in \ints} \int_{-\infty}^\infty \frac{\dd k}{2\pi \beta} \log\left( 1 + e^{-\beta(E_{\ell, k} - \mu)}\right).
\end{align}
Define $c_\ell(\nu) = (\ell - \nu)^2/R^2 - \mu$.
Performing the integral (\ref{eq:omega-finite-T}) gives
a series in $\ell$,
\begin{align}
    \Omega_\beta(\nu) = - \frac{1}{\sqrt{4 \pi \beta^3}} \sum_{\ell \in \ints} \operatorname{Li}_{3/2} \left( - e^{-\beta c_\ell(\nu)}\right).
\end{align}
In the $T \to 0$ limit, we reproduce the result (\ref{eq:omega-eff}),
\begin{align}
    \lim_{\beta \to \infty} \Omega_\beta(\nu) \simeq \frac{2}{3\pi} \sum_{|\ell - \nu| \leq R} \left( \mu - \frac{(\ell - \nu)^2}{R^2} \right)^{3/2},
\end{align}
with $\mu = E_F = 1$. We may again perform Poisson resummation on
the sum (\ref{eq:omega-finite-T}). The function $f$ we sum over is
\begin{align}
    f(\ell - \nu) = \int_{-\infty}^{\infty} \frac{\dd k}{2\pi \beta} \log\left( 1 + e^{-\beta(E_{\ell, k} - \mu)}\right).
\end{align}
This function is real and even, so computing the first two terms in the
$m$-series gives
\begin{align}
    & \widetilde{f}(0) = - \frac{R}{2\beta^2} \operatorname{Li}_{2}(-e^{\beta \mu})
    & \widetilde{f}(1) = \frac{R}{2\beta} \int_0^\infty \dd x \, \log\left( 1 + e^{-\beta(x - \mu)} \right) J_0(2\pi R \sqrt{x}).
\end{align}
So, our effective potential becomes
\begin{align}
    \label{eq:omega-finite-T-poisson}
    \Omega_\beta(\nu) \approx \widetilde{f}(0) + 2 \widetilde{f}(1) \cos(2\pi \nu).
\end{align}
For low temperature, we should recover (\ref{eq:omega-poisson-sum}).
This is indeed the case, as for $\mu = 1$ and $\beta \gg 1$, we have
\begin{align}
    \widetilde{f}(0) & \simeq \frac{R}{4} + \frac{\pi^2 R}{12\beta^2} \approx \frac{R}{4},\\
    2\widetilde{f}(1) & \simeq \frac{J_2(2\pi R)}{\pi^2 R} \frac{\pi^2 R / \beta}{\sinh(\pi^2 R / \beta)} \approx \frac{J_2(2\pi R)}{\pi^2 R},
\end{align}
which reproduces (\ref{eq:omega-poisson-sum}). Notice the appearance of the damping factor
\begin{align}
    \label{eq:lk-damping}
    D_\beta(R) = \frac{\pi^2 R / \beta}{\sinh(\pi^2 R/ \beta)}.
\end{align}
This is the standard Lifshitz--Kosevich damping factor for oscillatory finite-temperature
observables \cite{Lifshitz:1956zz}. This essentially appears because most of the
variation of fermionic systems occurs at the Fermi surface, with variation scale
$\beta^{-1}$. So, if we take some zero-temperature observable $\mathcal{O}_0$ and 
produce a finite-temperature observable from it, $\mathcal{O}_T$, this
amounts to averaging the zero-temperature observable against the variation of the occupation
near the Fermi surface.

We can again draw two important conclusions from (\ref{eq:omega-finite-T-poisson}):
\begin{enumerate}
    \item The extrema of $\Omega_\beta(\nu)$ lie at $\nu = 0$ and $\nu = 1/2$,
    and the value that is chosen by the system is dictated by the sign of $\widetilde{f}(1)$.
    If $\widetilde{f}(1) > 0$, then $\nu = 0$ is chosen, and if $\widetilde{f}(1) < 0$,
    then $\nu = 1/2$ is chosen. A plot of how this sign varies as we tune $R$ and 
    $\beta$ is given in Figure~\ref{fig:finite-temp-f1}, from which we can draw
    two conclusions. First, the value of flux trapping is essentially only determined
    by $R$, with periodicity approximately of the Bessel function zeros.
    Second, at high temperature and large $R$, $\widetilde{f}(1)$ is exponentially damped,
    and so approximately neither value of the flux is preferred by the system.
    \item Branching off of the last point, at large $R$, due to the relative scaling of 
    $\widetilde{f}(1)/\widetilde{f}(0) \sim R^{-5/2}$, there is no preferred value of 
    flux trapping as the system gets very large.
\end{enumerate}

\begin{figure}
    \centering
    \includegraphics[width=0.8\textwidth]{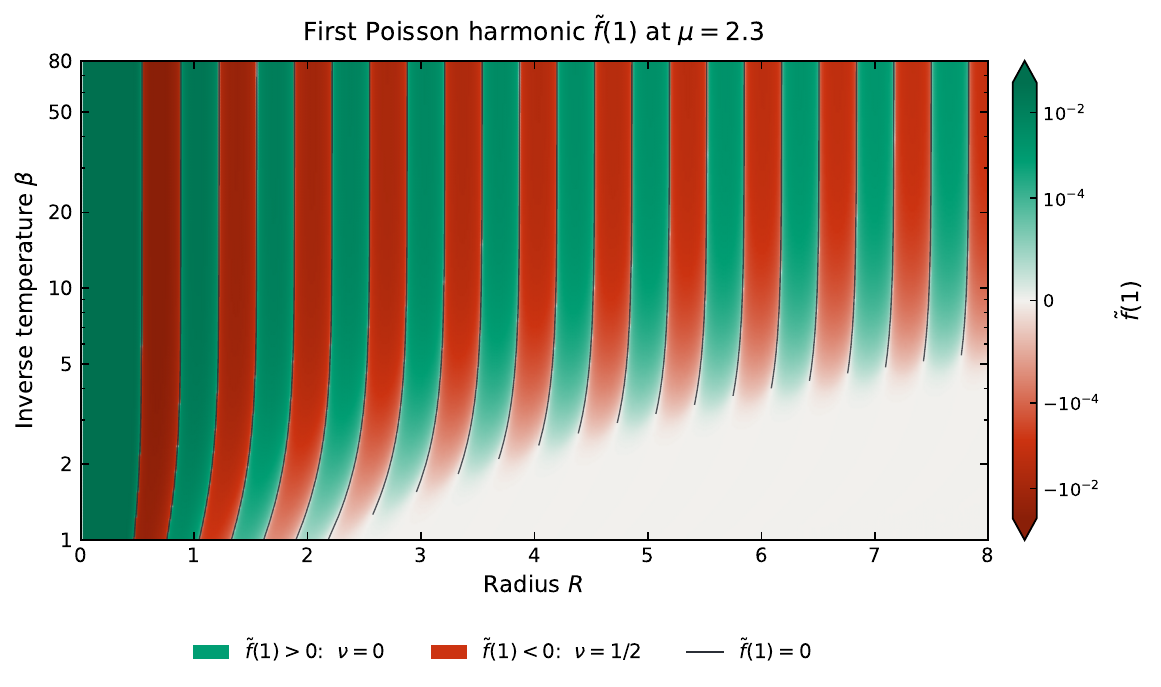}
    \caption{Heatmap of the value of $\widetilde{f}(1)$, plotted against $R$ and $\beta$.
    Red coloration of the graph indicates $\widetilde{f}(1) < 0$, and thus $\nu = 1/2$,
    and green coloration indicates $\widetilde{f}(1) > 0 \iff \nu = 0$. Gray coloration
    indicates $|\widetilde{f}(1)| \lesssim 10^{-6}$, where essentially neither value is preferred
    by the system.}
    \label{fig:finite-temp-f1}
\end{figure}

Restoring units, the Lifshitz--Kosevich factors (\ref{eq:lk-damping})
depend on temperature via
\begin{align}
    \frac{\pi^2 R}{\beta} = \frac{2\pi^2 k_B T R}{\hbar v_F},
\end{align}
so the oscillation persists up to $T_*(R) = \hbar v_F / 2\pi^2 k_B R$.
Note that this condition only depends on the radius of the cylinder 
and not its height. For a copper cylinder with $R = 100$ nm, we get 
$T_* = 6.1$ K, and for $R = 1\,\mu\mathrm{m}$, we get $T_* = 0.61$ K \cite{ashcroft-mermin}.

There are two constraints that must be satisfied for flux trapping to
occur:
\begin{enumerate}
    \item The cylinder must be large enough to trap flux. Expanding
    (\ref{eq:cyl-energy}) around a minimum $\nu_*$ and using
    \begin{align}
        |\Omega''(\nu_*)| = \frac{4 |J_2(2\pi R)| D_\beta(R)}{R},
    \end{align}
    we have
    \begin{align}
        \label{eq:ell-z}
        \ell_z = \left( \frac{4\pi}{\mu_0 R |\Omega''(\nu_*)|} \right)^{1/2}
        \sim \frac{\pi R^{1/4}}{\sqrt{\mu_0}}.
    \end{align}
    This is around $0.11\,\mu\mathrm{m}$ for a copper cylinder with $R = 1\,\mu\mathrm{m}$,
    so for example $L = 20\,\mu\mathrm{m}$ is sufficient to trap flux.
    \item The inductance of the cylinder must allow for flux to dynamically
    trap. Precisely, changing the net flux by $\delta \Phi$ requires driving a
    screening current around the cylinder, which requires an energy of
    $\delta \Phi^2 / 2\mathcal{L}$. The depth $\Delta E$ of the energy well in
    $\Omega$ must be large enough to supply the energy cost of the screening current---in
    the worst case, when the external flux sits at the maximum $\nu_* + 1/2$,
    the screening current requires an energy of
    \begin{align}
        E = \frac{1}{2\mathcal{L}} \left( \frac{h}{2e} \right)^2,
    \end{align}
    and so we must have
    \begin{align}
        \mathcal{L} > \frac{1}{2 \Delta E} \left( \frac{h}{2e} \right)^2.
    \end{align}
    For the copper cylinder above, this amounts to having $\mathcal{L}$ be at least
    $\sim 0.2\,\mathrm{nH}$. In SI units, $\mathcal{L} = \mu_r \mu_0 \pi R^2 / L$,
    where $\mu_r$ is the relative permeability of the material filling the cylinder, as
    in \cite{Komargodski:2026vwn}. An empty cylinder has
    $\mathcal{L} \approx 2 \times 10^{-4}\,\mathrm{nH}$, so we need $\mu_r \gtrsim 10^3$.
\end{enumerate}

In the above we treat the cylinder as a single sheet of spinless electrons. This is
of course a simplification of a realistic experimental setup. Adding spin to our
system doubles $\Omega(\nu)$, and having a wall of finite thickness adds transverse
subbands to the spectrum. These subbands change the depth of the well but not the possible locations
of the flux trapping. We also assume that the electrons are phase coherent over the
cylinder's circumference, $2\pi R \approx 6\,\mu\mathrm{m}$.

\section{Trapping Flux on the Disk}
\label{sec:disk}

In this section, we will demonstrate flux trapping on the disk by studying
linear corrections to the infinite disk current, and then we will show that
flux is trapped when we place the system on the lattice.

On the finite disk, we may take the outer radius of the disk to infinity, 
$b \to \infty$, and then ask what the value of the current density is at radius $r$.
For a range of $r$ values, the infinite disk should be a good approximation to
the finite disk of finite outer radius $b = R$. 
However, we expect at some finite radius $r_*$ for the error term
\begin{align}
    \Delta j^\theta(r) = j_R^\theta(r) - j_\infty^\theta(r)
\end{align}
to become $O(1)$ relative to the infinite current, 
where $j_R^\theta(r)$ is the current density at finite
outer radius $b = R$ and $j_\infty^\theta(r)$ is the current density with $b \to \infty$. 
In \cite{Komargodski:2026vwn}, there is a numerical argument that this occurs at
\begin{align}
    r \sim R^{1/3}.
\end{align}
We now explicitly show this to be the case,
and determine that
\begin{align}
    r_* = \left( \frac{|2\nu - 1| \sin(\pi \nu)}{4\pi |c_R(\nu)|} \right)^{1/3} R^{1/3},
\end{align}
where $c_R(\nu)$ is a constant linear in $\nu$ and depending 
on the fractional part of $R/\pi$.

Past the value of $r_*$, the current correction dominates,
so $j^\theta(r) = j_\infty(r) + \Delta j(r) \simeq \Delta j(r)$.
As we will show in Section~\ref{sec:disk-breakdown},
the current correction goes like $1/r$, and is precisely
\begin{align}
    \Delta j(r) = \frac{c_R(\nu)}{2\pi R} \frac{1}{r}.
\end{align}
This correction to the current is enough to show that flux 
is trapped on the disk, as we demonstrate in Section~\ref{sec:disk-flux-trapping}.

\subsection{Infinite Disk Approximation Breakdown}
\label{sec:disk-breakdown}

First, exchange the azimuthal current for the scaled variable
$Q(r) = r^2 j^\theta(r)$. Implement Dirichlet boundary conditions
at $b = R$, and implement the flux as a twist,
\begin{align}
    \psi(\theta + 2\pi) = e^{-2\pi \ii \nu} \psi(\theta).
\end{align}
Then the wave functions for $b = R$ are given by
\begin{align}
    \psi_{\ell, p}(r) = e^{\ii \ell \theta} R(r) = e^{\ii \ell \theta} \frac{J_{|\ell|}(j_{|\ell|, p} (r/R))}{\sqrt{\pi} R |J_{|\ell| + 1}(j_{|\ell|, p})|},
\end{align}
where $\ell \in \ints - \nu$. Computing $j_R^\theta = (1/r^2) \operatorname{Im}(\psi^* \partial_\theta \psi)$
from this and filling the Fermi sea gives
\begin{align}
    \label{eq:Q-r}
    Q_R(r, \nu) = \frac{1}{\pi R^2} \sum_{\ell \in \ints - \nu} \ell \sum_{j_{|\ell|, p} < R} \left( \frac{J_{|\ell|}(j_{|\ell|, p} (r/R))}{|J_{|\ell| + 1}(j_{|\ell|, p})|} \right)^2.
\end{align}
As computed in Appendix~\ref{app:infinite-R},
\begin{align}
    \label{eq:Q-infty}
    Q_\infty(r, \nu) = \frac{1}{2\pi} \sum_{\ell \in \ints - \nu} \ell \int_0^1 \dd k \, k J_{|\ell|}(kr)^2.
\end{align}
So we have the explicit error term of
\begin{align}
    \label{eq:Q-error}
    \Delta Q(r, \nu) = Q_R(r, \nu) - Q_\infty(r, \nu).
\end{align}
The error here essentially arises from approximating the sum (\ref{eq:Q-r})
by the integral (\ref{eq:Q-infty}). We can take the difference
(\ref{eq:Q-error}) by using the Euler--Maclaurin formula and by making
a number of approximations. Each approximation reduces the range of validity
of the formula (\ref{eq:Q-error}). The details of the calculation (\ref{eq:Q-error})
are in Appendix~\ref{app:finite-R}. The result of this calculation
is that
\begin{align}
    \label{eq:full-delta-Q}
    \Delta Q(r, \nu) = \frac{c_R}{2\pi} \frac{r}{R} + O\left( \frac{r^{2/3}}{R} \right) + o(r/R),
\end{align}
where the first error term was derived analytically, and the second
only supported numerically. This approximation is valid for $R^{1/3} \lesssim r \ll R^{1/2}$.
We plot this for multiple values of $R$ in Figure 
\ref{fig:current-correction-plots}.

In Appendix~\ref{app:infinite-R}, we exactly compute (\ref{eq:Q-infty}). It is given by
\begin{align}
    \label{eq:q-infty-exact}
    Q_\infty(r, \nu) = \frac{(2\nu - 1) \sin(\pi \nu) \sin(2r)}{8\pi^2 r^2} + O(r^{-3}).
\end{align}
Thus, equating (\ref{eq:full-delta-Q}) to the envelope of
(\ref{eq:q-infty-exact}) gives the crossover radius,
\begin{align}
    r_* = \left( \frac{|2\nu - 1| \sin(\pi \nu)}{4\pi |c_R(\nu)|} \right)^{1/3} R^{1/3}.
\end{align}

\begin{figure}
    \centering
    \includegraphics[width=0.98\textwidth]{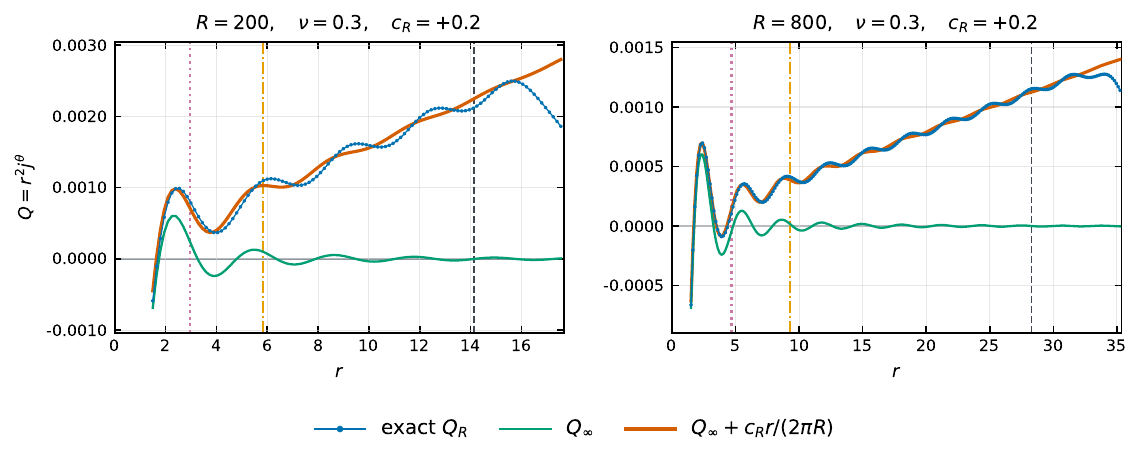}
    \caption{Graphs of the three different currents for two values of the outer radius.
    The purple dotted line is $r_*$, the orange dash-dotted line is $R^{1/3}$,
    and the black dashed line is $R^{1/2}$. The dotted blue line is the exact
    finite-$R$ current, $Q_R(r)$, the green line is the infinite-$R$
    current $Q_\infty(r)$, and the orange line is the infinite-$R$ current plus the
    correction term $\Delta Q(r) = c_R r / 2\pi R$. }
    \label{fig:current-correction-plots}
\end{figure}

\subsection{Trapping Flux on the Disk}
\label{sec:disk-flux-trapping}

Consider the energy on the disk,
\begin{align}
    \label{eq:disk-energy-functional}
    E[\nu(r)] = \int \dd r \left( \frac{2\pi}{\mu_0 r} (\partial_r \nu(r))^2 - \Omega(\nu(r), r) \right),
\end{align}
where $\Omega(\nu, r)$ is the effective potential obtained from
integrating out the fermions. As before,
\begin{align}
    \Omega(\nu, r) = - \Omega_{\rm gc}(\nu, r),
\end{align}
where $\Omega_{\rm gc}$ is the grand canonical potential. $\Omega$
obeys
\begin{align}
    \frac{\delta \Omega}{\delta \nu(r)} = 4\pi r \, j^\theta(r).
\end{align}
Varying (\ref{eq:disk-energy-functional}) gives
\begin{align}
    \label{eq:nu-eom}
    \nu'' - \frac{\nu'}{r} = - \mu_0 r^2 j^\theta(r).
\end{align}
Consider $R^{1/3} \ll r \ll R^{1/2}$. In this regime,
the current is dominated by the linear correction,
\begin{align}
    j^\theta(r) = j_\infty^\theta(r) + \Delta j^\theta(r) \simeq \Delta j^\theta(r),
\end{align}
where
\begin{align}
    \Delta j^\theta(r) = \frac{c_R(\nu(r))}{2\pi R} \frac{1}{r},
\end{align}
as before. Write the holonomy as $\nu(r) = \nu_* + \delta \nu(r)$, where $\nu_*$
is a fixed point, $\nu_* \in \{0, 1/2\}$. We find that
$c_R(\nu_* + \delta \nu(r)) = - \delta \nu(r)$. Substituting this
into equation (\ref{eq:nu-eom}) gives
\begin{align}
    \delta \nu'' - \frac{\delta \nu'}{r} - \frac{\mu_0}{2\pi R} r \, \delta \nu = 0.
\end{align}
This is solved by modified Bessel functions. Discarding a
divergent solution gives the answer
\begin{align}
    \label{eq:delta-nu-decay}
    \delta \nu(r) = C_K r K_{2/3}\!\left( \frac{2}{3} \left( \frac{r}{\ell_M} \right)^{3/2} \right),
\end{align}
where $C_K$ is a coefficient and $K_{2/3}$ a modified
Bessel function of the second kind. We also introduced
the length scale
\begin{align}
    \ell_M = \left( \frac{2 \pi R}{\mu_0} \right)^{1/3}.
\end{align}
Expanding this for large $r/\ell_M$ gives
\begin{align}
    \delta \nu(r) \simeq C_K \frac{\sqrt{3\pi}}{2} \, \ell_M \left( \frac{r}{\ell_M} \right)^{1/4}
    \exp\left( - \frac{2}{3} \left( \frac{r}{\ell_M} \right)^{3/2} \right).
\end{align}
The trapping length constrains the stiffness of the gauge field.
We work in the regime of $r_* \ll r \ll R^{1/2}$, and the decay
(\ref{eq:delta-nu-decay}) is visible only if $\ell_M$ obeys
\begin{align}
    r_* \ll \ell_M \ll R^{1/2} \iff \frac{2\pi}{R^{1/2}} \ll \mu_0 \ll \frac{2\pi}{C_*^3},
\end{align}
where
\begin{align}
    r_* = C_* R^{1/3} = \left( \frac{|2\nu - 1| \sin(\pi \nu)}{4\pi |c_R(\nu)|} \right)^{1/3} R^{1/3}.
\end{align}
In a real metal, $\mu_0 \ll 1$, so the upper bound automatically holds.
Additionally, for large $R$ the lower bound is satisfied. The lower bound
dictates the minimum size of the disk, $R \gg (2\pi/\mu_0)^2$. This
bound is generically quite large, as $\mu_0 = 4\pi \alpha v_F/c$ is, in real
metals, around $5 \times 10^{-4}$ \cite{ashcroft-mermin}, giving
$R \gg 10^8$ in units of $k_F^{-1}$. In physical units, this is a disk
of a radius around a centimeter. This large size makes sense, as flux trapping is an infrared
effect.

\subsection{Lattice Regularization}
\label{sec:lattice}

We now see what happens when we replace the continuum of
the infinite disk with a lattice. Due to the fact that flux trapping
is an IR effect, we do not expect
it to be modified by a lattice regularization. Let us recall some basic
details about the lattice here. The essential ingredient behind a lattice
is that we discretize space, and thus we transition from
functions $f(x)$ to finite vectors $f(x_j) \rightarrow f_j$\footnote{This
is for the case of a finite number of lattice sites. For an infinite, but
countable, number of lattice sites, functions simply become $f \colon \Lambda \to N$,
where $\Lambda$ is the lattice and $N$ the target space. The only essential
difference then between the continuum and lattice case is the countability of the
domain.}. A generic lattice Hamiltonian is then a finite matrix, $H \to h_{ij}$.
For our purposes, we choose to work with the tight-binding model,
\begin{align}
    \label{eq:tight-binding}
    H = \sum_{i} h_{ii} c_i^\dagger c_i - \sum_{ij} (t_{ij} c_j^\dagger c_i + t_{ij}^* c_i^\dagger c_j),
\end{align}
where $t_{ij}$ is the probability amplitude for a particle to hop from site $i$ to site $j$.
The $c_j^\dagger$'s create spinless fermions. Since we are working with a background AB
flux, we introduce the analog of minimal coupling on the lattice. The holonomy becomes
\begin{align}
    \label{eq:holonomy-lattice}
    W_\gamma(A) = \exp\Big( \ii \sum_{(ij) \in \gamma} \theta_{ij} \Big) = \exp\left( \frac{2\pi \ii}{\Phi_0} \oint_\gamma A \right) = \exp\left( \frac{2\pi \ii}{\Phi_0} \int_0^{2\pi} A_\theta \, \dd\theta \right),
\end{align}
where
\begin{align}
    \label{eq:theta-ij}
    \theta_{ij} = \frac{2\pi}{\Phi_0} \int_{i}^{j} \dd\mathbf{s} \cdot \mathbf{A}(\mathbf{s})
\end{align}
is the Peierls phase \cite{Peierls:1933} associated with the hop from site $i$ to site $j$.
Our Hamiltonian becomes
\begin{align}
    \label{eq:tight-binding-AB}
    H = \sum_{i} h_{ii} c_i^\dagger c_i - \sum_{ij} (t_{ij} e^{\ii\theta_{ij}} c_j^\dagger c_i + t_{ij}^* e^{-\ii\theta_{ij}} c_i^\dagger c_j).
\end{align}
Gauge transformations $\zeta$ manifest as phase redefinitions
on individual sites 
\begin{align}
    c_j \mapsto e^{\ii\zeta_j} c_j,
\end{align}
corresponding to $\mathbf{A} \mapsto \mathbf{A} + \grad \zeta$.
The Hamiltonian (\ref{eq:tight-binding-AB}) is gauge-invariant
if $\theta_{ij}$ transforms as
\begin{align}
    \theta_{ij} \mapsto \theta_{ij} + \zeta_j - \zeta_i.
\end{align}
Neither small nor large gauge transformations change the value of
(\ref{eq:tight-binding-AB}), as they only appear through the factors
like $e^{\ii\theta_{ij}}$.

A current is the response of the Hamiltonian to a
background gauge field, $\mathbf{J} = - \delta H / \delta \mathbf{A}$.
On the lattice with our specific model (\ref{eq:tight-binding-AB}), we define the current as
\begin{align}
    \label{eq:tight-binding-current}
    I_{ij} = \frac{2\pi \ii}{\Phi_0} (t_{ij} e^{\ii\theta_{ij}} c_j^\dagger c_i - t_{ij}^* e^{-\ii\theta_{ij}} c_i^\dagger c_j).
\end{align}
$I_{ij}$ is physically interpreted as the current going from site $i$ to site $j$. 
For real $t_{ij} = t$, we have
\begin{align}
    \label{eq:expected-I-ij}
    \langle I_{ij} \rangle = - \frac{4\pi t}{\Phi_0} \operatorname{Im}\left( e^{\ii \theta_{ij}} \langle c_j^\dagger c_i \rangle \right).
\end{align}

A lattice annulus is obtained by taking a disk, discretizing
it, and then excising all lattice sites within a radius $a$
from the origin. Similar to the continuum case, we can
show that the flux values $\nu = 0$ and $\nu = 1/2$ are
privileged in some sense. Specifically, the system will choose
these values so long as time-reversal symmetry is not spontaneously broken.
Via numerics, we can show that this holds for various different lattice
regularizations.

\begin{theorem}
    Let $H$ be the lattice Hamiltonian on the annulus and
    $\nu$ the reduced AB flux. Then, the chosen values
    of $\nu$ are $0$ or $1/2$ as long as time-reversal symmetry
    is not spontaneously broken.
    \label{thm:lattice-flux-trapping}
\end{theorem}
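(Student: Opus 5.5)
The plan is to reduce this to Theorem~\ref{thm:0-or-1/2} by checking its two inputs on the lattice: the grand potential $\Omega(\nu)$ of the lattice Hamiltonian (\ref{eq:tight-binding-AB}) on the annulus must be periodic mod $1$ and even in $\nu$. Periodicity comes from a large gauge transformation. Start in the cut gauge from the introduction, where the phase $2\pi\nu$ sits only on bonds crossing a single cut. Shifting $\nu\to\nu+1$ multiplies each cut bond by $e^{2\pi\ii}=1$, so $H(\nu+1)=H(\nu)$ exactly. In any other gauge, such as the symmetric gauge $\theta_{ij}=\nu\Delta_{ij}$, the two Hamiltonians are related by the site-phase unitary $U=\exp(\ii\sum_j\zeta_j n_j)$. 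Here $\zeta$ is single-valued on the sites, because the hole of radius $a$ removes the origin and the only gauge-invariant data are the Wilson loops (\ref{eq:holonomy-lattice}). Since $U$ commutes with $N$, the spectrum of $H-\mu N$ is unchanged, and so is $\Omega_\beta(\nu)=\beta^{-1}\log\Tr e^{-\beta(H-\mu N)}$ at any temperature.

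Evenness is the second step. I take time reversal to mean that the bulk model is real in some gauge, i.e.\ $h_{ii}\in\mathbb{R}$ and $t_{ij}\in\mathbb{R}$. Let $\mathcal{T}$ be complex conjugation in the site occupation basis, with $\mathcal{T}c_j\mathcal{T}^{-1}=c_j$. Then $\mathcal{T}e^{\ii\theta_{ij}}\mathcal{T}^{-1}=e^{-\ii\theta_{ij}}$, and so $\mathcal{T}H(\nu)\mathcal{T}^{-1}=H(-\nu)$, the lattice analogue of (\ref{eq:H-is-even}). An antiunitary conjugation preserves the (real) spectrum and commutes with $N$, so $\Omega(\nu)=\Omega(-\nu)$. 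If time reversal is implemented by a general antiunitary that also acts on sites, I would compose it with a gauge transformation. The argument only needs $\mathcal{T}$ to map the holonomy $\nu$ to $-\nu$, and that follows from antiunitarity acting on the loop product $W_\gamma$.

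With periodicity and evenness in hand, the set of minimizers $M\subset\mathbb{R}/\mathbb{Z}$ satisfies $M=-M$. On the circle the only fixed points of $\nu\mapsto-\nu$ are $0$ and $1/2$, as in the proof of Theorem~\ref{thm:0-or-1/2}. A minimizer $\nu_*\notin\{0,1/2\}$ therefore comes paired with a distinct partner $-\nu_*$. Choosing either one is a ground state that is not $\mathcal{T}$-invariant, even though $H$ is, which is spontaneous breaking of $\mathcal{T}$. Hence, absent spontaneous breaking, the chosen values are $0$ or $1/2$.

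The main obstacle is conceptual rather than computational. On a finite lattice at $T>0$, $\Omega$ is real-analytic in $\nu$, so the extrema at $0,1/2$ are automatic. At $T=0$ level crossings can make $E(\nu)$ nondifferentiable, and the statement must then be read purely through the minimizer set, not through critical points. I will be careful to phrase the conclusion that way.
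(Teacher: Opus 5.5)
Your proposal is correct and follows the same route as the paper. Both arguments get periodicity in $\nu$ from a large site-phase gauge transformation. The paper writes it directly in the symmetric gauge as $U=\exp(\ii\sum_j\varphi_j n_j)$, which is the same large gauge transformation as your cut-gauge identity $H(\nu+1)=H(\nu)$ transported by gauge maps. Both get evenness from $\mathcal{T}H(\nu)\mathcal{T}^{-1}=H(-\nu)$ with $\mathcal{T}$ being complex conjugation, and then apply the pairing argument of Theorem~\ref{thm:0-or-1/2}. One cosmetic slip: with $\Omega_\beta=\beta^{-1}\log\Tr e^{-\beta(H-\mu N)}$ the selected fluxes are maximizers rather than minimizers, but $M=-M$ holds either way.
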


\begin{proof}
    Let $\varphi_j = \arg(x_j + \ii y_j) \in (-\pi, \pi]$ be the polar
    angle of the $j^{\rm th}$ site. Since $\varphi$ has a branch cut, the raw
    difference $\varphi_j - \varphi_i$ is not necessarily the angular difference
    between site $i$ and site $j$. So, we instead define
    \begin{align}
        \label{eq:wrapped-angle}
        & \Delta_{ij} \coloneqq \operatorname{Arg}\left(e^{\ii(\varphi_j - \varphi_i)}\right)
        \in (-\pi, \pi],
        & \theta_{ij}(\nu) = \nu \Delta_{ij}.
    \end{align}
    Physically, $\Delta_{ij}$ is the change in polar angle obtained from
    continuously following the angle along the bond. For a closed cycle $\gamma$,
    \begin{align}
        \sum_{(i,j) \in \gamma} \Delta_{ij} = 2\pi w(\gamma),
    \end{align}
    where $w(\gamma) \in \mathbb{Z}$ is the winding number of $\gamma$ around the
    AB solenoid. This says that contractible cycles carry zero flux, while a cycle that
    winds once around the puncture has holonomy $2\pi\nu$.
    Define the unitary operator
    \begin{align}
        \label{eq:unitary-varphi-j}
        U = \exp\left( \ii \sum_{j} \varphi_j n_j \right),
    \end{align}
    where $n_j = c_j^\dagger c_j$ is the number operator, and $\sum_j$ sums over 
    all lattice sites. It is simple to check that
    \begin{align}
        & U c_j U^{-1} = e^{-\ii \varphi_j} c_j
        & U c_j^\dagger U^{-1} = e^{\ii \varphi_j} c_j^\dagger.
    \end{align}
    Although the real-valued angle $\varphi_j$ jumps across its branch cut,
    $e^{\ii\varphi_j}$ is single-valued on the lattice sites. Furthermore,
    \begin{align}
        e^{\ii(\varphi_j - \varphi_i)} = e^{\ii\Delta_{ij}},
    \end{align}
    and so conjugating a hopping term by $U$ changes its Peierls factor from
    $e^{\ii\nu\Delta_{ij}}$ to $e^{\ii(\nu+1)\Delta_{ij}}$.
    Conjugating (\ref{eq:tight-binding-AB}) by (\ref{eq:unitary-varphi-j})
    gives
    \begin{align}
        U H(\nu) U^{-1} = H(\nu + 1),
    \end{align}
    and thus $\spec(H(\nu)) = \spec(H(\nu + 1))$.
    Applying time-reversal symmetry to (\ref{eq:tight-binding-AB}) gives
    \begin{align}
        \label{eq:lattice-even}
        \mathcal{T} H(\nu) \mathcal{T}^{-1} = H(-\nu),
    \end{align}
    and thus the spectrum of $H$ is even in $\nu$. This follows from
    the fact that the only $\nu$-dependence in (\ref{eq:tight-binding-AB}) 
    is in the factors $\theta_{ij}$. By the same arguments in
    the proof of Theorem \ref{thm:0-or-1/2}, $\nu \in \{ 0, 1/2 \}$
    unless time-reversal symmetry is spontaneously broken.
\end{proof}

Since there is no magnetic field outside the AB solenoid,
we can choose a vector potential that is pure gauge. Define $\varphi$
to be the polar angle. Then, we can choose the symmetric gauge for
$\mathbf{A}(\bfx)$,
\begin{align}
    & \mathbf{A}(\bfx) = \frac{\Phi}{2\pi} \grad \varphi,
    & \oint \dd\bfx \cdot \mathbf{A} = \Phi.
\end{align}
The Peierls phase becomes
\begin{align}
    \label{eq:peierls-phase}
    \theta_{ij}(\nu)
    = \frac{2\pi}{\Phi_0} \int_i^j \dd\mathbf{s} \cdot \mathbf{A}(\mathbf{s})
    = \nu \Delta_{ij},
\end{align}
with $\Delta_{ij}$ defined in (\ref{eq:wrapped-angle}). This is
the phase we use in the lattice numerics below.

Now we numerically investigate flux trapping on the lattice.
We take a square-lattice annulus of outer radius $b = 16$ in units
of the lattice spacing, with hoppings $t_{ij} = t = 1$. We take the sites
to have half-integer coordinates so that the flux pierces the center of
a plaquette. We use five different choices of UV data at the puncture
site:
\begin{table}[H]
    \centering
    \begin{tabular}{|l|c|l|c|}
        \hline
        Regularization & Sites removed & Edge modification & Sites \\
        \hline
        Plaquette puncture & $0$ & None & $812$ \\
        Four-site hole & $4$ & None & $808$ \\
        Edge barrier & $4$ & $V = 1.5\,t$ on the $8$ central sites & $808$ \\
        Weak edge hopping & $4$ & $t' = 0.45\,t$ on central edge bonds & $808$ \\
        Larger hole & $16$ & None & $796$ \\
        \hline
    \end{tabular}
    \caption{The five different UV regularizations used in the scans
    below. ``Central sites'' refers to the eight sites nearest the central plaquette.
    Adding a potential $V$ to these sites means that we add a term $\sum_{\text{central}} V_i c_i^\dagger c_i$
    to the Hamiltonian (\ref{eq:tight-binding-AB}).}
    \label{tab:lattice-regulators}
\end{table}
For each choice, we diagonalize the Hamiltonian (\ref{eq:tight-binding-AB}) 
on the lattice for the fundamental domain of $0 \leq \nu \leq 1/2$.
We then compute the negative of the grand canonical potential
\begin{align}
    \Omega_\beta(\nu, \mu) = - \Omega_{\rm gc} = \beta^{-1} \sum_\alpha \log\left( 1 + e^{-\beta(E_\alpha(\nu) - \mu)}\right),
\end{align}
where $E_\alpha(\nu)$ is the energy of eigenstate $\alpha$ and $\mu$ the
chemical potential. The system will choose $\nu$ such that $\Omega_\beta(\nu, \mu)$
is maximized. We sample a range of $\beta$ and $\mu$ values---the
full sweeps are listed in Table~\ref{table:sweeps}. As a check on the
flux-periodicity of the spectrum, $\spec H(\nu + 1) = \spec H(\nu)$, 
the spectra of $H(1/4)$ and $H(5/4)$ agree to $2 \cdot 10^{-14} \, t$.
Evenness needn't be checked, as $H(-\nu) = H(\nu)^*$, so the equality 
of the spectra is exact in floating-point arithmetic.

\begin{table}[H]
    \centering
    \begin{tabular}{|c|c|}
        \hline
        $\beta t$ & $\{1,\,10,\,20,\,50,\,70,\,90,\,200\}$ and $T=0$ \\
        \hline
        $\mu/t$ & $\{-3.2,\,-2,\,-1,\,0,\,1\}$ \\
        \hline
    \end{tabular}
    \caption{Parameter sweeps used for the lattice flux-trapping scans.}
    \label{table:sweeps}
\end{table}

Scanning over these parameters---along with the five regularizations from
earlier---gives a total of $200$ different simulations. These simulations confirm that
the flux values $\nu = 0$ and $\nu = 1/2$ are trapped every time. A few 
representative graphs are shown below:

\begin{figure}[H]
    \centering
    \includegraphics[width=0.95\textwidth]{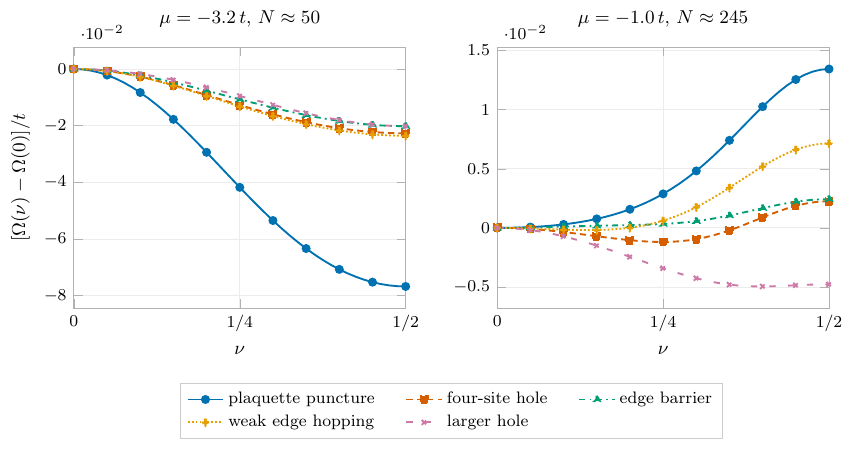}
    \caption{Representative lattice flux-trapping landscapes, plotted as
    $(\Omega(\nu) - \Omega(0))$ at $\beta t = 90$ for the five puncture
    regularizations. Recall that the maximum of $(\Omega(\nu) - \Omega(0))$
    dictates where the flux will be trapped. Left: at low filling, $\mu = - 3.2 t$
    and $N \approx 50$, flux is trapped at $\nu = 0$. Right: at higher
    filling, $\mu = - t$ and $N \approx 245$, the larger hole traps
    $\nu = 0$ while the other regularizations trap $\nu = 1/2$.}
    \label{fig:lattice-flux-landscapes}
\end{figure}

\section*{Acknowledgements}
\label{sec:acknowledgements}

We thank F. Popov and Z. Komargodski for their support, suggestions, and for pointing
out errors in earlier drafts of this paper. We also thank Claude and Codex for useful discussions
and for their help in generating the code and figures in this paper. 
The author gratefully acknowledges financial support from the Simons Center.

\appendix

\section{\texorpdfstring{Finite-$R$}{Finite-R} Current Calculation}

\label{app:finite-R}
Recall that for a sufficiently smooth function $f$ that takes values 
on the lattice $\{ x_n = a_0 + n a \}_{n \in \ints}$, with $a$ the 
lattice spacing, we have
\begin{align}
    \label{eq:euler-maclaurin}
    a \sum_{n = 0}^N f(a_0 + n a) & = \int_{a_0}^{a_0 + Na} \dd x \, f(x) + \frac{a}{2} \left( f(a_0) + f(a_0 + Na) \right)\\
    & + \frac{a^2}{12}\left(f'(a_0+Na)-f'(a_0)\right)+O(a^4).
\end{align}
For $r \gg k_F^{-1}$, the Bessel zeros are approximately
\begin{align}
    \label{eq:approx-bessel-zeros}
    j_{|\ell|, p} \simeq \jelp^{(0)} = \pi \left( p + \frac{|\ell|}{2} - \frac{1}{4} \right) + O(\ell^2/R).
\end{align}
Define $k_{|\ell|, p} = j_{|\ell|, p} / R$. Then $k_{|\ell|, p + 1} - k_{|\ell|, p} \simeq \pi / R$
at fixed $|\ell|$. So, $a = \pi / R$ is our lattice spacing. 
Near the Fermi surface, where $j_{|\ell|, p} \simeq R$,
the corrections to the Bessel zeros (\ref{eq:approx-bessel-zeros})
are of order $O(\ell^2/R)$. Since only channels with $|\ell| \lesssim r$
contribute at radius $r$, the range of validity of this approximation
is $r \ll R^{1/2}$. Using this approximation requires another condition
on the Bessel zeros. The radial filling condition is
\begin{align}
    \jelp < R \iff p < \frac{1}{a} - \frac{|\ell|}{2} + \frac{1}{4}.
\end{align}
This defines a maximum integer $p$, which we'll write as
\begin{align}
    P(|\ell|) = \left\lfloor \frac{1}{a} - \frac{|\ell|}{2} + \frac{1}{4} \right\rfloor,
\end{align}
in the case that the argument is not an integer. If the
argument is an integer, then we instead use $P(|\ell|) = \lceil \cdots \rceil - 1$. $P(|\ell|)$ is the upper bound
on the radial sum over the Bessel zeros.
The first correction to the Bessel zeros, $\jelp^{(1)} = O(\ell^2/R)$,
must obey
\begin{align}
    \label{eq:non-resonance-condition}
    & j_{|\ell|, P(|\ell|)}^{(0)} + j_{|\ell|, P(|\ell|)}^{(1)} < R,
    & j_{|\ell|, P(|\ell|) + 1}^{(0)} + j_{|\ell|, P(|\ell|) + 1}^{(1)} \geq R.
\end{align}
This is to ensure that we are not filling the wrong number of
radial levels in the sum. We will say that this is a ``non-resonance''
assumption, and it is actually worth discussing in a bit more detail.

The first-order correction to the Bessel zero is exactly given by
\begin{align}
    \jelp^{(1)} = -\frac{4|\ell|^2-1}{8j_{|\ell|,p}^{(0)}}.
\end{align}
Near the Fermi surface, the approximate and exact filling conditions
disagree when the inequality (\ref{eq:non-resonance-condition}) is violated,
that is when
\begin{align}
    |R - \jelp^{(0)}| \lesssim |\jelp^{(1)}| = O(|\ell|^2/R).
\end{align}
This says that for each angular-momentum channel, there is a small interval
between the approximate and exact Bessel zeros within which the two filling prescriptions
disagree---that is, the integer $P(|\ell|)$ calculated from the approximate
filling condition differs from the integer $P(|\ell|)$ calculated from the exact
filling condition.

Since the inequalities (\ref{eq:non-resonance-condition}) are restrictions on the size
of the first-order correction to the Bessel zero, they're really conditions on the
size of $R$ in relation to the angular momentum $|\ell|$. So, the violation of
(\ref{eq:non-resonance-condition}) defines a set of ``bad'' outer radii, $\mathcal{R}_{\rm bad}(|\ell|)$,
for which the approximate and exact filling conditions disagree. Hopefully, this 
set of bad radii should be small in comparison to the total range of outer radii
as indexed by $|\ell|$---this turns out to indeed be the case. We now quantify this statement.

Fix an angular momentum channel $\alpha$. Then we ask:
what fraction of the radii at fixed $\alpha$ are ``bad''?
The condition for a radius to be bad is that adding the
first-order correction to the Bessel zero changes the
filling level (\ref{eq:non-resonance-condition}). The fraction
of bad radii depends on the interval we sample from.
Say we can choose radii in the interval $(0, L]$.
Define the set of bad radii to be
\begin{align}
    \mathcal{R}_{\rm bad}(\alpha) = \begin{cases}
        \bigcup_{p \geq 1} (j_{\alpha, p}^{(0)}, j_{\alpha, p}] & 0 \leq \alpha < 1/2\\
        \varnothing & \alpha = 1/2\\
        \bigcup_{p \geq 1} (j_{\alpha, p}, j_{\alpha, p}^{(0)}] & \alpha > 1/2.
    \end{cases}
\end{align}
If the outer radius of our system is in this set,
the filling condition changes when we add the first-order
correction to the Bessel zero. The size of this set relative
to the interval $(0, L]$ is
\begin{align}
    \frac{|\mathcal{R}_{\rm bad}(\alpha) \cap (0, L]|}{L} = O\left( \frac{\alpha^2 \log{L}}{L} \right).
\end{align}
The current, however, involves a sum over all angular-momentum
channels with $|\ell|\lesssim r$. Define the corresponding bad-radius
set by
\begin{align}
    \mathcal R_{\rm bad}^{\rm ang}(r,\nu) = 
    \bigcup_{\substack{\ell \in \ints - \nu\\|\ell| \lesssim r}} \mathcal R_{\rm bad}(|\ell|).
\end{align}
The individual bad sets are strongly correlated with
one another. This is because
\begin{align}
    j_{\alpha + 2, p - 1}^{(0)} = j_{\alpha, p}^{(0)}.
\end{align}
So, the angular momenta split into four subfamilies:
\begin{align}
    \alpha = \nu + 2k,
    \quad 
    1 + \nu + 2k,
    \quad
    1 - \nu + 2k, 
    \quad 
    2 - \nu + 2k.
\end{align}
Each subfamily is controlled by its largest angular momentum below $r$,
rather than by the sum of all its channels. For fixed $r$,
\begin{align}
    \frac{|\mathcal{R}_{\rm bad}^{\rm ang}(r, \nu) \cap (0, L]|}{L}
    =
    O\left( \frac{r^2 \log L}{L} \right).
\end{align}
Equivalently, in a local interval near a large outer radius $R$, the
fraction of radii for which at least one contributing channel is
miscounted is
\begin{align}
    O\left(\frac{r^2}{R}\right).
\end{align}
We can see that this fraction vanishes throughout the strict regime
$r \ll R^{1/2}$.

At a bad radius, the incorrect filling condition
can change the coefficient of the finite-$R$ correction by 
an order-one amount. It does not, however, change the scaling
of the correction---linear scaling is retained even if a radius is
bad. So, the coefficient is sensitive to bad radii, but the linear scaling
of the current correction and thus the scaling of the crossover radius, $r_*$, 
remain unchanged.

For example, take $R = 500$, $\nu = 0.3$, and
$r = R^{1/3} \simeq 7.94$. The estimated fraction
of bad radii is approximately $6.2\%$, while numerics
give $5.9\%$. So, roughly $94\%$ of nearby outer radii satisfy the
non-resonance condition for every angular-momentum channel
contributing at $r$. In the bad radius cases, the exact current
differs from the predicted current by an average of $166\%$
(this is due to switching from a small positive value to a small
negative value), and flux trapping is retained in all cases.

Using the approximations above, the finite-$R$ $Q$-term (\ref{eq:Q-r}) then becomes
\begin{align}
    \label{eq:Q-R-approx}
    Q_R(r) \simeq \frac{a}{2\pi} \sum_{\ell \in \ints - \nu} \ell \, \sum_p k_{|\ell|, p} J_{|\ell|}(k_p r)^2,
\end{align}
with $a = \pi / R$ the lattice spacing from before. Using (\ref{eq:euler-maclaurin})
gives
\begin{align}
    \label{eq:delta-Q}
    \Delta Q(r, \nu) = \frac{1}{2R} \sum_{\ell \in \ints - \nu} \ell \left( \frac{1}{2} - \left\{ \frac{R}{\pi} - \frac{|\ell|}{2} + \frac{1}{4} \right\}\right) J_{|\ell|}(r)^2 + \mathcal{R}_{\rm rad},
\end{align}
where $\{ \cdots \} = (\cdots) - \lfloor \cdots \rfloor$ denotes the
fractional part of $(\cdots)$ and
\begin{align}
    \mathcal{R}_{\rm rad} = \sum_{\ell\in\ints-\nu} \ell \, \mathcal E_{|\ell|,R}
\end{align}
is the sum of the leftover Euler--Maclaurin terms. Specifically,
\begin{align}
    \mathcal{E}_{\alpha, R} & = \frac{1}{\pi R^2} \sum_{j_{\alpha, p} < R} \left( \frac{J_\alpha(j_{\alpha, p} (r/R))}{J_{\alpha + 1}(j_{\alpha, p})} \right)^2\\
    & - \frac{1}{2\pi} \int_0^1 \dd k \, k J_{\alpha}(kr)^2\\
    & - \frac{1}{2R} \left( \frac{1}{2} - \left\{ \frac{R}{\pi} - \frac{\alpha}{2} + \frac{1}{4} \right\}\right) J_\alpha(r)^2.
\end{align}
Define
\begin{align}
    \label{eq:sigma-def}
    \sigma_R(\alpha) = \frac{1}{2} - \left\{ \frac{R}{\pi} - \frac{\alpha}{2} + \frac{1}{4} \right\}
\end{align}
to rewrite (\ref{eq:delta-Q}) as
\begin{align}
    \label{eq:new-delta-q-sum}
    \Delta Q(r, \nu) = \frac{1}{2R} \sum_{\ell \in \ints - \nu} \ell \, \sigma_R(|\ell|) J_{|\ell|}(r)^2 + \mathcal{R}_{\rm rad}.
\end{align}
We now make a number of approximations to be able to
explicitly perform the sum (\ref{eq:new-delta-q-sum}).
\begin{enumerate}
    \item Replace $J_\alpha(r)^2$ with its Debye envelope,
    \begin{align}
        \label{eq:j-debye}
        J_\alpha(r)^2 = \frac{1}{\pi \sqrt{r^2 - \alpha^2}} + \text{oscillatory terms}.
    \end{align}
    The oscillatory terms, when summed over $\ell$, give us an error term
    that we call $\mathcal{R}_{\rm osc}$. This is not analytically bounded,
    but numerical evidence shows that this term scales as $o(r/R)$.
    \item Replace the upper bound on the sum with $M = r - r^{1/3}$
    instead of $\infty$. The Debye approximation above is not valid
    when $\alpha$ lies within $O(r^{1/3})$ of $r$, as the denominator
    diverges when $\alpha \to r$. The omitted region contains $O(r^{1/3})$
    angular momentum channels, and within this region, $|\ell| = O(r)$.
    So, $\sigma_R(|\ell|) \leq 1/2$, and $J_{|\ell|}(r) = O(r^{-1/3})$.
    This implies that each omitted term is $O(r^{1/3}/R)$, and summing over
    the $O(r^{1/3})$ channels gives an error term $\mathcal{R}_{\rm turn}
    = O(r^{2/3}/R)$.
    \item It is useful to pair the $\ell = m + 1 - \nu$ and $\ell = -m - \nu$
    terms together. Doing this and using the approximations above gives the
    sum 
    \begin{align}
        \label{eq:updated-sum}
        \Delta Q(r, \nu) & = \frac{1}{2R} \sum_{m < M} \left( \sigma_R(m + 1 - \nu) \frac{m + 1 - \nu}{\pi\sqrt{r^2 - (m + 1 - \nu)^2}} \right.\\
        & \left. - \sigma_R(m + \nu) \frac{m + \nu}{\pi \sqrt{r^2 - (m + \nu)^2}} \right) + \mathcal{R}_{\rm osc} + \mathcal{R}_{\rm turn} + \mathcal{R}_{\rm rad}.
    \end{align}
    The leading behavior of the error terms comes
    from $\mathcal{R}_{\rm turn} = O(r^{2/3}/R)$.
    Define the weight function
    \begin{align}
        \label{eq:weight}
        w_r(\alpha) = \frac{\alpha}{\pi\sqrt{r^2 - \alpha^2}}.
    \end{align}
    We can then approximate the weight terms in (\ref{eq:updated-sum})
    as $w_r(m + 1 - \nu) \approx w_r(m + \nu)$, picking up an $O(r^{1/3}/R)$
    error term $\mathcal{R}_{\rm shift}$. This gives the sum
    \begin{align}
        \label{eq:sum-same-weights}
        \Delta Q(r, \nu) = \frac{1}{2R} \sum_{m < M} (\sigma_R(m + 1 - \nu) - \sigma_R(m + \nu)) w_r(m + \nu) + O(r^{2/3}/R) + o(r/R).
    \end{align}
    \item We now note that the sequence
    \begin{align}
        \zeta_R(m) = \sigma_R(m + 1 - \nu) - \sigma_R(m + \nu)
    \end{align}
    has period $2$, that is, $\zeta(m + 2) = \zeta(m)$. Recall that
    any period-two sequence can be split into a constant term plus
    an alternating term. We write this as
    \begin{align}
        \zeta_R(m) = c_R + (-1)^m a_R,
    \end{align}
    with
    \begin{align}
        & c_R = \frac{\zeta(0) + \zeta(1)}{2}
        & a_R = \frac{\zeta(0) - \zeta(1)}{2}.
    \end{align}
    Note that these constants also depend on $\nu$.
    It is useful to explicitly write the definition of $c_R(\nu)$.
    First, define $t = R/\pi + 1/4$. Then,
    \begin{align}
        \label{eq:cr}
        c_R(\nu) & = \frac{1}{2}\left(
            \sigma_R(1 - \nu) - \sigma_R(\nu)
            + \sigma_R(2 - \nu) - \sigma_R(1 + \nu)
        \right)\\
        & = \frac{1}{2}\left(
            \left\{ t - \frac{\nu}{2} \right\} - \left\{ t - \frac{1}{2} + \frac{\nu}{2} \right\}
            + \left\{ t - \frac{1}{2} - \frac{\nu}{2} \right\} - \left\{ t + \frac{\nu}{2} \right\}
        \right).
    \end{align} 
    Now, this division of $\zeta(\nu; m)$ splits the sum (\ref{eq:sum-same-weights}) 
    into a constant and an alternating part. The alternating terms mostly cancel against each other, 
    leaving an $O(r^{1/3})$ error. One can see this by pairing neighboring terms in the alternating
    sum together and noting that since $w_r(m)$ is increasing as a function
    of $m$, the sum of these differences is bounded by the highest term,
    $O(w_r(M)) = O(r^{1/3})$. Call the error from this term 
    $\mathcal{R}_{\rm alt}$, which scales as $\mathcal{R}_{\rm alt} = O(r^{1/3}/R)$.
    \item Finally, approximate the sum by an integral
    \begin{align}
        \label{eq:sum-to-int}
        \sum_{m < M} w_r(m) = \int_0^M \dd x \, w_r(x) + O(r^{1/3}).
    \end{align}
    Evaluating this integral gives
    \begin{align}
        \label{eq:int-answer}
        \int_0^M \dd x \, w_r(x) = \frac{1}{\pi} (r - \sqrt{r^2 - M^2}),
    \end{align}
    and with $M^2 = (r - r^{1/3})^2$, (\ref{eq:int-answer})
    becomes
    \begin{align}
        \frac{1}{\pi}\left(r - \sqrt{r^2 - M^2}\right) = \frac{1}{\pi} \left(r - O(r^{2/3})\right).
    \end{align}
\end{enumerate}
This leaves a final answer of
\begin{align}
    \label{eq:final-sum}
    \Delta Q(r, \nu) = \frac{c_R}{2\pi} \frac{r}{R} + O\left( \frac{r^{2/3}}{R} \right) + o(r/R),
\end{align}
where the scaling $O(r^{2/3}/R)$ is analytically determined
and the scaling $o(r/R)$ is numerically supported.

\section{\texorpdfstring{Infinite-$R$}{Infinite-R} Current Calculation}
\label{app:infinite-R}

We aim to calculate the infinite-disk current (\ref{eq:Q-infty}).
Substituting $u = kr$ gives
\begin{align}
    Q_\infty(r, \nu) = \frac{1}{2\pi r^2} \sum_{\ell \in \ints - \nu} \ell \int_0^r \dd u \, u \, J_{|\ell|}^2(u).
\end{align}
Define the series
\begin{align}
    \mathcal{J}_\nu(u) \coloneqq \sum_{n = 0}^\infty (n + \nu) J_{n + \nu}^2(u).
\end{align}
Splitting the sum gives
\begin{align}
    Q_\infty(r, \nu)
    = \frac{1}{2\pi r^2} \sum_{\ell \in \ints - \nu} \ell \int_0^r \dd u \, u \, J_{|\ell|}^2(u)
    = \frac{1}{2\pi r^2} \int_0^r \dd u \, u \left(\mathcal{J}_{1 - \nu}(u) - \mathcal{J}_{\nu}(u)\right).
\end{align}
Taking the derivative of $\mathcal{J}_\nu(u)$ and using $2J_\nu' = J_{\nu-1} - J_{\nu+1}$
and $(2\nu/u)J_\nu = J_{\nu-1} + J_{\nu+1}$ produces
\begin{align}
    \mathcal{J}_\nu'(u) = 2 \sum_{n \geq 0} (n + \nu) J_{n + \nu} J_{n + \nu}' = \frac{u}{2} \sum_{n \geq 0} (J_{n + \nu - 1}^2 - J_{n + \nu + 1}^2) = \frac{u}{2} \left( J_{\nu - 1}^2 + J_{\nu}^2 \right),
\end{align}
where we used the fact that the series is telescoping in the last step.
We may use the Lommel integral to integrate this expression, giving
\begin{align}
    I_\alpha(u) \coloneqq \int_0^u \dd v \, v \, J_\alpha^2(v)
    = \frac{u^2}{2} \left( J_\alpha^2 - J_{\alpha - 1}J_{\alpha + 1} \right),
\end{align}
and so
\begin{align}
    \mathcal{J}_\nu(u) = \frac{1}{2} \left( I_{\nu - 1}(u) + I_{\nu}(u) \right) + C.
\end{align}
The integration constant is fixed as $C = 0$ by $\mathcal{J}_\nu(0) = 0$.
Substituting this into our expression for $Q_\infty(r, \nu)$ gives
\begin{align}
    Q_\infty(r, \nu) = \frac{1}{4\pi r^2} \int_0^r \dd u \, u \left( I_{-\nu}(u) + I_{1 - \nu}(u) - I_{\nu - 1}(u) - I_{\nu}(u) \right).
\end{align}
Thus our task is to evaluate expressions in the form of
\begin{align}
    \mathcal{I}_\alpha(r) \coloneqq \int_0^r \dd u \, u \, I_{\alpha}(u) = \frac{1}{2} \int_0^r \dd u \, u^3 (J_\alpha^2 - J_{\alpha - 1} J_{\alpha + 1}).
\end{align}
We can rewrite the second term using $J_{\alpha \pm 1} = \mp J_\alpha' 
+ (\alpha/u) J_{\alpha}$. Then,
\begin{align}
    \mathcal{I}_\alpha(r) = \frac{1}{2} \int_0^r \dd u \, u^3 \left[J_{\alpha}'^2 + J_{\alpha}^2 \left( 1 - \frac{\alpha^2}{u^2} \right) \right].
\end{align}
We now use Bessel's equation,
\begin{align}
    J_\alpha'' + \frac{1}{u}J_\alpha' + \left( 1 - \frac{\alpha^2}{u^2} \right) J_\alpha = 0.
\end{align}
One can check that
\begin{align}
    \mathcal{F}_\alpha(u) \coloneqq
        &\left(\frac{u^4}{6} + \frac{\alpha^2 u^2}{6} + \frac{\alpha^2(1 - \alpha^2)}{3}\right) J_\alpha^2
        - \frac{u^3}{3} J_\alpha J_\alpha'
        + \left(\frac{u^4}{6} + \frac{(\alpha^2 - 1)u^2}{3}\right) J_\alpha'^2
\end{align}
satisfies
\begin{align}
    \frac{\dd}{\dd u} \mathcal{F}_\alpha(u) = u^3 J_\alpha^2.
\end{align}
For positive $\alpha$, the lower endpoint clearly vanishes. The dangerous cases
are $\alpha = \nu - 1$ and $\alpha = -\nu$, since both are in $(-1, 0)$.
In these cases,
\begin{align}
    & J_\alpha(u) \sim c_\alpha u^\alpha,
    & J_\alpha'(u) \sim \alpha c_\alpha u^{\alpha - 1}.
\end{align}
The leading terms in $\mathcal{F}_\alpha(u)$ are then
\begin{align}
    \frac{\alpha^2(1 - \alpha^2)}{3} c_\alpha^2 u^{2\alpha}
    + \frac{\alpha^2(\alpha^2 - 1)}{3} c_\alpha^2 u^{2\alpha} = 0,
\end{align}
so the singular terms actually cancel. Thus
$\mathcal{F}_{\nu - 1}(u) = O(u^{2\nu})$ and
$\mathcal{F}_{-\nu}(u) = O(u^{2 - 2\nu})$, both of which vanish at the lower
endpoint because $0 < \nu < 1$.
Then,
\begin{align}
    \int_0^r \dd u \, u^3 J_\alpha^2
    = &\left(\frac{r^4}{6} + \frac{\alpha^2 r^2}{6} + \frac{\alpha^2(1 - \alpha^2)}{3}\right) J_\alpha^2
        - \frac{r^3}{3} J_\alpha J_\alpha'\\
    &\quad+ \left(\frac{r^4}{6} + \frac{(\alpha^2 - 1)r^2}{3}\right) J_\alpha'^2.
\end{align}
Equivalently, integrating by parts,
\begin{align}
    \mathcal{I}_\alpha(r) = \frac{r^2}{2} I_\alpha(r) - \frac{1}{2} \int_0^r \dd u \, u^3 J_\alpha^2.
\end{align}
Substituting the expression above into this identity gives the finite boundary
value
\begin{align}
    \mathcal{I}_\alpha(r)
    = \frac{1}{6} \left[
        \left(r^4 - 2\alpha^2 r^2 + \alpha^2(\alpha^2 - 1)\right)J_\alpha^2
        + r^3 J_\alpha J_\alpha'
        + r^2(r^2 - \alpha^2 + 1) J_\alpha'^2
    \right].
\end{align}
Here all of the Bessel functions are evaluated at $r$.
Finally, the current may be written as
\begin{align}
    \label{eq:current_in_terms_of_I}
    Q_\infty(r, \nu) = \frac{1}{4\pi r^2} \left( \mathcal{I}_{-\nu}(r) + \mathcal{I}_{1 - \nu}(r) - \mathcal{I}_{\nu - 1}(r) - \mathcal{I}_{\nu}(r) \right).
\end{align}
Define $S_\nu(r)$ as in \cite{Komargodski:2026vwn}:
\begin{align}
    S_\nu(r) & \coloneqq \left( \frac{2r^2 - 2\nu^2 + 5\nu}{6} \right) J_\nu^2
    + \left( \frac{2r^2 - 2\nu^2 + 3\nu - 1}{6} \right) J_{\nu + 1}^2\\
    &+ \left( \frac{\nu(2\nu - 1)(\nu - 1)}{3r} - \frac{(2\nu + 1)r}{3} \right) J_\nu J_{\nu + 1}.
\end{align}
We'd like to relate this to (\ref{eq:current_in_terms_of_I}).
Recalling that
\begin{align}
    & J_\nu' = \frac{\nu}{r}J_\nu - J_{\nu + 1},
    & J_{\nu - 1} = \frac{2\nu}{r}J_\nu - J_{\nu + 1},
\end{align}
and using the definition of $\mathcal{I}_\alpha(r)$ gives
\begin{align}
    \frac{1}{2}\left(\mathcal{I}_{\nu - 1}(r) + \mathcal{I}_{\nu}(r)\right)
    = \frac{r^2}{2} S_\nu(r).
\end{align}
Thus our current becomes
\begin{align}
    Q_\infty(r, \nu)
    &= \frac{1}{2\pi r^2} \left[
        \frac{1}{2} \left( \mathcal{I}_{-\nu}(r) + \mathcal{I}_{1 - \nu}(r) \right)
        - \frac{1}{2} \left( \mathcal{I}_{\nu - 1}(r) + \mathcal{I}_{\nu}(r) \right)
    \right] \\
    &= \frac{1}{4\pi} \left( S_{1 - \nu}(r) - S_\nu(r) \right).
\end{align}

Since $Q_\infty(r, \nu) = r^2 j_\infty^\theta(r)$, raising the index with
the cylindrical metric gives the current density
\begin{align}
    \label{eq:final_j_upper_theta}
    j_\infty^\theta(r) = \frac{1}{4\pi r^2} \left( S_{1 - \nu}(r) - S_\nu(r) \right).
\end{align}
At distances $rk_F \gg 1$, we can use the large-$r$ approximation for
the Bessel functions. In this regime, we find that the current goes like
\begin{align}
    j_\infty^\theta(r) = \frac{(2\nu - 1)\sin(\pi\nu) \sin(2r)}{8\pi^2 r^4} + O(r^{-5}),
\end{align}
and thus
\begin{align}
    Q_\infty(r, \nu) = \frac{(2\nu - 1) \sin(\pi \nu) \sin(2r)}{8\pi^2 r^2} + O(r^{-3}),
\end{align}
which is (\ref{eq:q-infty-exact}).

\section{UV Near-Origin Current}
\label{app:uv-current}

In this appendix, we show that there is a persistent
current localized near the origin in the infinite-volume
limit of the disk, at both zero and finite temperature.

Note that the above analyses necessarily miss this effect,
as those were IR analyses, and this is a UV effect. The current
is localized to
\begin{align}
    \label{eq:current-localization}
    r \sim k_F^{-1} e^{-1/|\nu|},
\end{align}
which follows from the near-origin behavior of 
the Bessel function in (\ref{eq:inf-vol-current}).
This is outside the range of the effective theory of the Fermi surface,
and thus the current is non-universal.

\subsection{Zero-Temperature Disk}
\label{app:uv-zero-T}

We now consider the case of the infinite disk, i.e. the
$a \to 0$ and $b \to \infty$ limit of an annulus of inner radius
$a$ and outer radius $b$. In this limit, the wave functions become
\begin{align}
    \label{eq:inf-disk-wave-functions}
    \psi_{\ell, k} = \sqrt{\frac{k}{2b}} e^{\ii\ell \theta} J_{|\ell|}(kr),
\end{align}
with $k^2 = E$. Calculating the current from these wave functions
using $j^\theta = (1/r^2) \operatorname{Im}(\psi^* \partial_\theta \psi)$ gives
\begin{align}
    \label{eq:inf-vol-current}
    j^\theta = \sum_{\ell \in \ints - \nu} \sum_{E < 1} \frac{k\ell}{2br^2} (J_{|\ell|}(kr))^2.
\end{align}
Computing the net current through the disk gives
\begin{align}
    \label{eq:anomalous-current-int}
    I_{\rm tot} = \int_0^\infty r \, \dd r \, j^\theta = \frac{1}{4b} \sum_{\ell \in \ints - \nu} \int_0^1 \dd E \, k \frac{\dd N_\ell}{\dd E} \sgn(\ell).
\end{align}
In the large-volume limit, the density of states is
\begin{align}
    \frac{\dd N_\ell}{\dd E} = \frac{b}{2\pi k},
\end{align}
and thus,
\begin{align}
    \label{eq:tot-current}
    I_{\rm tot} = \frac{1}{8\pi} \sum_{\ell \in \ints - \nu} \sgn(\ell) = \frac{1}{8\pi} \sum_{n \in \ints} \sgn(n - \nu).
\end{align}
This sum is generically infinite and must be regularized. Before
doing this, let us take a second to recognize the fact that $I_{\rm tot}$
is the $\eta$-invariant of Atiyah, Patodi, and Singer \cite{Atiyah:1975jf}. Consider the operator
\begin{align}
    D(\nu) = p_\theta - \nu = - \ii \partial_\theta - \nu
\end{align}
acting on single-valued periodic functions on $S^1$. This one-dimensional 
operator is first-order and self-adjoint, and thus it's a Dirac operator.
Its spectrum is
\begin{align}
    D(\nu) e^{\ii n\theta} = (n - \nu) e^{\ii n\theta},
\end{align}
and thus its eigenvalues are $\lambda_n = n - \nu$ for $n \in \ints$.
Recall the definition\footnote{We use the standard convention of
$\sgn(0) = 0$. This is also the physically sensible choice in this 
problem, as intuitively, $\eta$ just counts how many left- vs.\ 
right-moving angular momentum modes we have, and $\ell = 0$
corresponds to neither a left- nor right-moving angular momentum mode.} 
of the $\eta$-invariant of a Dirac operator $D$:
\begin{align}
    \eta(D) = \sum_{\lambda_n} \sgn(\lambda_n),
\end{align}
where the sum above is regularized, as $D$ has infinitely
many eigenvalues. For $D(\nu)$, the $\eta$-invariant is given
by
\begin{align}
    \eta(D(\nu)) \eqqcolon \eta_D = \sum_{n \in \ints} \sgn(n - \nu).
\end{align}
This implies that the current may be written as
\begin{align}
    I_{\rm tot} = \frac{1}{8\pi} \eta_D.
\end{align}
This also gives a nice physical interpretation of the $\eta$-invariant:
it simply counts the number of left- and right-moving angular momentum 
modes in the system. Regulating the sum by using an exponential-damping factor,
$e^{-\epsilon|\ell|} = e^{-\epsilon|n - \nu|}$, gives the regulated answer
\begin{align}
    \label{eq:reg-current}
    I_{\rm tot} = \frac{1}{8\pi} (2\nu - 1) = \frac{1}{8\pi} \eta_D.
\end{align}
An interesting fact about this current is that it discontinuously jumps 
from positive to negative as the flux $\Phi$ is turned on. Precisely,
\begin{align}
    \label{eq:current-jump}
    \Delta I_{\rm tot} 
    = \left(\lim_{\nu \to 0^+} I_{\rm tot} \right) 
    - \left(\lim_{\nu \to 0^-} I_{\rm tot} \right) 
    = - \frac{1}{4\pi}.
\end{align}
This is actually obvious from the point of view of the $\eta$-invariant.
From the definition of $\eta_D$, we can see that if $\nu$ crosses an integer value, 
then the value of  $\eta_D$ jumps by $\pm 2$. Thus, crossing $0$ from below produces a 
$-2$  change in the value of $\eta_D$, which implies
\begin{align}
    \Delta I_{\rm tot} = \frac{1}{8\pi} \Delta \eta_D = - \frac{1}{4\pi}.
\end{align}
So, we can see that the current jump from the AB flux crossing zero
from below records the spectral flow of the Dirac operator $D(\nu)$,
with $\SF = \Delta \eta_D / 2$. Thus, we can rewrite the above
as
\begin{align}
    \label{eq:spec-flow-rewrite}
    \Delta I_{\rm tot} = \frac{\SF}{4\pi}.
\end{align}
A picture of $D(\nu)$'s eigenvalues is in Figure~\ref{fig:spec-flow}.

\begin{figure}
    \centering
    \begin{tikzpicture}[xscale=3.4, yscale=0.62, >=stealth, line cap=round, nlab/.style={font=\footnotesize}]
        \draw[gray!60, thick] (-0.5,-0.5) -- (1.5,-2.5) node[right=2pt, nlab, gray!55!black] {$n=-1$};
        \draw[gray!60, thick] (-0.5, 2.5) -- (1.5, 0.5) node[right=2pt, nlab, gray!55!black] {$n=2$};
        \draw[blue!70!black, very thick, ->] (-0.5, 0.5) -- (1.5,-1.5) node[right=2pt, nlab, blue!70!black] {$n=0$};
        \draw[red!70!black,  very thick, ->] (-0.5, 1.5) -- (1.5,-0.5) node[right=2pt, nlab, red!70!black] {$n=1$};
        \node[gray!60!black] at (0.5, 2.45) {$\vdots$};
        \node[gray!60!black] at (0.5,-2.30) {$\vdots$};
        \draw[densely dashed, gray!55] (1,-2.6) -- (1,2.6);
        \draw[->, thick] (-0.85,0) -- (1.98,0) node[right] {$\nu$};
        \draw[->, thick] (0,-2.85) -- (0,2.85) node[above] {$\lambda$};
        \node[circle, fill=blue!70!black, inner sep=0pt, minimum size=3.6pt] at (0,0) {};
        \node[circle, fill=red!70!black,  inner sep=0pt, minimum size=3.6pt] at (1,0) {};
        \node[below left=-1pt] at (0,0) {$0$};
        \node[below right=1pt] at (1,0) {$1$};
    \end{tikzpicture}
    \caption{Spectral flow of the Dirac operator $D(\nu)$ with eigenvalues
    $\lambda_n = n - \nu$. As $\nu$ varies from $0$ to $1$, two eigenvalues
    touch zero: the $n = 0$ eigenvalue goes from zero to negative, and the
    $n = 1$ eigenvalue goes from positive to zero. Taking $\nu$ mod $1$,
    this can be seen as one eigenvalue crossing from positive to negative,
    thus producing a change in the $\eta$-invariant, $\Delta \eta_D = -2$.}
    \label{fig:spec-flow}
\end{figure}

The current value itself is situation-dependent---we will see 
this in the finite-temperature section below---but the jump 
is robust. Physically, we can interpret the current jump as a right-moving 
angular momentum mode crossing the Fermi energy and becoming a left-moving angular
momentum mode.

Since $\partial_\nu \Omega = 4\pi I_{\rm tot}$ (recall that $\Omega$
is defined with a minus sign to its usual value, $\Omega = - \Omega_{\rm gc}$),
we can integrate (\ref{eq:reg-current}) to get
\begin{align}
    \label{eq:reg-omega}
    & \Omega(\nu) = \frac{1}{2} (\nu^2 - |\nu|),
    & |\nu| < 1.
\end{align}
$\Omega(\nu)$ is even, $\Omega(\nu) = \Omega(-\nu)$, but since
$\Omega(\nu)$ has a cusp at $\nu = 0$, we cannot conclude that
$\nu = 0$ is a critical point. When $\nu$ is small, (\ref{eq:reg-omega})
reduces to
\begin{align*}
    \Omega(\nu) \approx - \frac{|\nu|}{2}, 
\end{align*}
and thus the current is
\begin{align*}
    I_{\rm tot} \approx - \frac{\sgn(\nu)}{8\pi}.
\end{align*}
So, we can see that there is a persistent current
no matter how small the flux is! This is interesting, and
may be interpreted as the system being perfectly diamagnetic
at small values of $\nu$. This most likely only holds in the continuum
model, though, and thus should be considered in that light.

We again may gain insight into this phenomenon from the perspective 
of $\eta_D$: the basic idea is that $\eta_D$ is indifferent to
the size of $\nu$---it only cares about $\nu$ crossing an integer value.
Since $\Omega(\nu)$ is even, we must have opposite derivatives---and
thus currents---to the left- and right-hand sides of $\nu = 0$. The
requirement that $\Delta \eta_D = -2$ as $\Phi$ crosses zero from below
forces persistent currents at near-zero fluxes to provide the necessary jump
in the $\eta$-invariant. Physically, the current jump
is quantized because of the discrete nature of the number of left- vs. right-moving
particles in the system; turning the flux on is equivalent to turning a
right-mover into a left-mover---thus producing a discrete jump in the current.

\subsection{Finite-Temperature Disk}
\label{app:uv-finite-T}

To go from zero to finite temperature, we simply
must replace the sum $\sum_{E < 1}$ with $\sum_{E} n_D(E)$,
with $n_D(E)$ the Fermi-Dirac distribution. We have
\begin{align}
    I_{\rm tot}(\beta, \mu) = \int_0^\infty r \, \dd r \, j^\theta = \frac{1}{4b} \sum_{\ell \in \ints - \nu} \int_0^\infty \dd E \, n_D(E) k \frac{\dd N_\ell}{\dd E} \sgn(\ell).
\end{align}
Performing the integral gives
\begin{align}
    \label{eq:finite-temp-tot-current}
    I_{\rm tot}(\beta, \mu) = \frac{\eta_D}{8\pi} \left( \int_0^\infty \dd E \, n_D(E) \right) = \frac{\eta_D}{8\pi \beta} \log\left( 1 + e^{\beta \mu} \right).
\end{align}
At $\beta \to \infty$, $\mu = 1$, this reproduces the zero-temperature
current (\ref{eq:tot-current}). Every result of Section~\ref{app:uv-zero-T} 
holds in the finite-temperature case, but each is multiplied by the factor
\begin{align}
    \label{eq:finite-t-factor}
    c(\beta, \mu) = \frac{\log\left( 1 + e^{\beta \mu} \right)}{\beta}.
\end{align}
In the low-temperature limit, we have that
\begin{align}
    c(\beta, \mu) \approx \mu,
\end{align}
and so our results are simply multiplied by $\mu$
for large $\beta$. For $\mu = 1$, $\beta \to \infty$,
we clearly recover the results of Section~\ref{app:uv-zero-T}.
For small $\beta$ and fixed $\mu$,
\begin{align}
    \label{eq:high-temp-c}
    c(\beta, \mu) \simeq \frac{\log(2)}{\beta} + \frac{\mu}{2} + O(\beta).
\end{align}
So, the current diverges as $\beta \to 0 \iff T \to \infty$.
This is to be expected, as at fixed chemical potential $\mu$
with $T \to \infty$, each energy state is half-filled, and thus
the integral in (\ref{eq:finite-temp-tot-current}) diverges. Physically,
one can think of this as adding infinitely many left- and right-moving
particles to the system, but in asymmetric amounts. Specifically,
there is an asymmetry of $\eta_D(\nu) = 2\nu - 1$ particles per shell,
and so for an infinite number of filled shells, you get an infinite current
in one direction or the other.

Integrating $4\pi$ times (\ref{eq:finite-temp-tot-current}) with respect to $\nu$
gives
\begin{align}
    \label{eq:finite-temp-omega}
    \Omega(\beta, \mu) = \frac{(\nu^2 - |\nu|)}{2 \beta} \log\left( 1 + e^{\beta \mu} \right) = \frac{c(\beta, \mu)}{2}(\nu^2 - |\nu|).
\end{align}
Differentiating (\ref{eq:finite-temp-omega}) for small values of
$\nu$ and dividing by $4\pi$ produces
\begin{align}
    \label{eq:finite-temp-small-nu-current}
    I_{\rm tot}(\beta, \mu) \approx - \frac{\sgn(\nu)}{8\pi} \, c(\beta, \mu).
\end{align}
As $\nu$ crosses zero, we see the current jump
\begin{align}
    \Delta I_{\rm tot}(\beta, \mu) = \left(\lim_{\nu \to 0^+} I_{\rm tot}(\beta, \mu) \right) 
    - \left(\lim_{\nu \to 0^-} I_{\rm tot}(\beta, \mu) \right) 
    = -\frac{c(\beta, \mu)}{4\pi}.
\end{align}

However, for a physical system, it's more natural to hold the particle
density fixed on the disk. This produces a chemical potential that runs with
$T$, thus fixing the divergence seen in (\ref{eq:high-temp-c}).
Precisely, we would like to fix the particle density on the disk,
\begin{align}
    \label{eq:particle-density}
    n(\beta, \mu) = N(\beta, \mu)/\pi b^2,
\end{align}
where $N(\beta, \mu)$ is the particle number depending on $\beta$
and $\mu$. Calculating $N(\beta, \mu)$ and fixing $n(\beta, \mu) = n_0 = 1/4\pi$ 
implies that $\mu$ should run like
\begin{align}
    \label{eq:running-mu}
    \mu(\beta) = \frac{\log(e^{\beta} - 1)}{\beta}.
\end{align}
As $\beta \to 0$,
\begin{align}
    \label{eq:mu-infty}
    \mu(\beta \to 0) \simeq \lim_{\beta \to 0} \left( - \frac{\log(1/\beta)}{\beta} + \frac{1}{2} + O(\beta) \right) = - \infty.
\end{align}
This compensates for the divergence in the current,
as $c(\beta, \mu) = 1$, and so at fixed density,
eq. (\ref{eq:finite-temp-tot-current}) becomes
\begin{align}
    I_{\rm tot}(\beta, \mu) = \frac{\eta_D}{8\pi} c(\beta, \mu) = \frac{\eta_D}{8\pi},
\end{align}
regardless of the value of $\beta$. One way to think about this
is that the current is entirely dictated by the number of particles
in the system, and increasing $T$ and decreasing $\mu$ have antagonistic
effects in this regard. Increasing $T$ fills more energy levels with 
particles---thus increasing the particle number---while decreasing $\mu$ 
implies that its conjugate variable, $N$, should also decrease to minimize the 
grand canonical potential. These effects work to cancel each other out
to retain a constant particle density, thus keeping the current value constant.

\subsection{Radial Disorder}
\label{app:uv-radial}

We now show that the persistent current is robust to adding
radial potentials that are constant near the origin to the Hamiltonian. 
Let the Hamiltonian of our system be
\begin{align}
    \label{eq:radial-H}
    H(r) = \frac{1}{2}\left( p_r^2 + \frac{p_\theta^2}{r^2} \right) + V(r),
\end{align}
with the flux imposed through the twisted boundary condition
$\psi(r, \theta + 2\pi) = e^{- 2 \pi \ii \nu} \psi(r, \theta)$,
and with
\begin{align}
    \label{eq:allowed-potential}
    V(r) = V_0 + O(r),
\end{align}
as $r \to 0$. For example, $V(r) = \exp(-r^2/a^2)$ or
$V(r) = (1/2) \omega^2 r^2$. As non-examples, $V(r) = \kappa / r^2$ and $V(r) = \sin(1/r)$ 
would not satisfy the condition (\ref{eq:allowed-potential}). The time-independent 
Schrödinger equation is
\begin{align}
    \label{eq:tise}
    \left[ \frac{1}{2} \left( - \frac{1}{r} \partial_r (r \partial_r) - \frac{1}{r^2} \partial_\theta^2 \right) + V(r) \right] \psi(r, \theta) = E \psi(r, \theta).
\end{align}
Separating variables $\psi(r, \theta) \sim e^{\ii \ell \theta} R(r)$,
with $\ell \in \ints - \nu$, gives the radial equation
\begin{align}
    \label{eq:radial}
    R''(r) + \frac{1}{r} R'(r) + \left( 2(E - V(r)) - \frac{\ell^2}{r^2} \right) R(r) = 0.
\end{align}
Fix a self-adjoint extension at the origin by getting
rid of divergent solutions, that is, those solutions
to the radial equation (\ref{eq:radial}) that diverge as $r^{-\alpha} \to \infty$
for $r \to 0$ with $\alpha > 0$. For $\alpha = 0$, fix the extension by
discarding the logarithmic solution, $R(r) \sim \log(r)$.
Define $\alpha = |\ell|$, so $\alpha^2 = \ell^2$.
Then, we claim that the allowed solutions to (\ref{eq:tise})
are
\begin{align}
    \label{eq:radial-soln}
    R(r) = A_{\ell, E} r^\alpha \left( 1 + \frac{V_0 - E}{2(\alpha + 1)} r^2 + O(r^3) \right),
\end{align}
for $\alpha > 0$. This follows from the ansatz $R(r) = r^\alpha (c_0 + c_1 r + c_2 r^2 + O(r^3))$
with $c_0 =  A_{\ell, E}$ being substituted into (\ref{eq:radial})
with $r$ in a neighborhood near the origin, and then matching powers of $r$. 
Since the persistent current only depends on the near-origin
behavior of $\psi(r, \theta) \sim e^{\ii \ell \theta} R(r)$, adding a
potential (\ref{eq:allowed-potential}) can only change the value
of the persistent current, but not the persistence of said current.
Thus, the current persists at the origin, and the presence of 
its associated current jump is robust to the radial perturbations 
(\ref{eq:allowed-potential}). 

However, this really isn't a useful statement---one can simply tune 
$A_{\ell, E}$ to be arbitrarily small to essentially get rid of the current. 
The useful statement we can extract from the above considerations is that as 
you turn the flux off, $\Phi \to 0^+$, the current localizes as a $\delta$-function
at the origin.

To show this, let's first recall a few notions from measure theory. We
say a family of measures $\dd\mu_\alpha(r)$ \emph{vaguely converges}
to the measure $\dd\mu(r)$ if for all compactly supported test functions $f$,
we have
\begin{align}
    \label{eq:vague-conv}
    \lim_{\alpha \to 0^+} \int_0^\infty f(r) \, \dd\mu_\alpha(r) = \int_0^\infty f(r) \, \dd\mu(r).
\end{align}
We indicate vague convergence by writing $\dd\mu_\alpha(r) \rightharpoonup \dd\mu(r)$. 
The limit $\alpha \to 0^+$ is specific to the case at hand, but we will always
mean this limit when we write $\dd\mu_\alpha(r) \rightharpoonup \dd\mu(r)$.
It is simple to show that the family of measures
\begin{align}
    \label{eq:measures-conv}
    \alpha r^{2\alpha - 1} \mathbf{1}_{(0, R)}(r) \, \dd r \rightharpoonup \frac{1}{2} \delta_0,
\end{align}
where $\delta_0$ is the delta distribution measure, defined by
\begin{align}
    \label{eq:delta}
    \int_0^\infty f(r) \, \delta_0(r) = f(0).
\end{align}
Define the current measure as follows. First, define
an indicator function $\chi(r)$ that is $1$ for $r$
such that (\ref{eq:radial-soln}) holds and $0$ otherwise.
Roughly, this means that $\chi(r) = 1$ near the origin and 
$0$ otherwise. Then, recall that
\begin{align}
    \label{eq:recall-j}
    j^\theta(r) = \frac{1}{r^2} {\rm Im}\left(\psi^* \, \partial_\theta \psi\right) = \ell \, r^{2|\ell| - 2}|A_{\ell, E}|^2 \left(1 + O(r^2)\right),
\end{align}
near the origin, which follows from (\ref{eq:radial-soln}).
Specifically, near the origin only the $\ell = -\nu$ angular momentum
mode contributes. So actually
\begin{align}
    j^{\theta}(r) = - \nu \, r^{2|\nu| - 2} |A_{-\nu, E}|^2 \left( 1 + O(r^2) \right).
\end{align}
Then, define the current measure as
\begin{align}
    \dd I_{\nu, E}(r) = \chi(r) j^\theta \cdot (r \, \dd r),
\end{align}
and the difference in current measures as the flux is turned off as
\begin{align}
    \Delta I_{E}(r) = \left( \lim_{\nu \to 0^+} \dd I_{\nu, E}(r) \right) - \left( \lim_{\nu \to 0^-} \dd I_{\nu, E}(r) \right).
\end{align}
We claim that
\begin{align}
    \label{eq:delta-claim}
    \Delta I_{E}(r) = - |A_{0, E}|^2 \delta_0.
\end{align}
This follows from (\ref{eq:measures-conv}) with $\alpha = |\nu|$.
For $\nu \to 0^+$, we have
\begin{align}
    \dd I_{\nu, E}(r) \rightharpoonup - \frac{1}{2} |A_{0, E}|^2 \delta_0.
\end{align}
Similarly, as the flux is turned off, $\dd I_{\nu, E}(r)$ vaguely converges
to $(1/2) |A_{0, E}|^2 \delta_0$, thus giving (\ref{eq:delta-claim}). So, the current 
localizes as a delta function near the origin as the flux is turned off. Note that 
the total current can still be tuned to be arbitrarily small, as it's multiplied 
by a factor of $|A_{0, E}|^2$---the $\delta$-function convergence is the only
novel behavior that is truly robust to radial disorder.

\subsection{Persistent Current on the Lattice}
\label{app:uv-lattice}

There is the question of whether or not the persistent
current and its current jump appear on the lattice. 
As stated previously, since we're changing the UV
physics, we should expect the current to probably vanish.

To test this, we first need to define what the analog
of the persistent current is on the lattice. The analog of
\begin{align}
    \label{eq:continuum-current-int}
    I_{\rm tot} = \int r \, \dd r \, j^\theta,
\end{align}
on the lattice is defined by choosing a ray
from the origin (a ``slice''
of the lattice), and then considering the sum
of currents perpendicular to this slice,
\begin{align}
    I_{\rm tot} = \sum_{ij} I_{ij}.
\end{align}
The sum over $ij$ here sums over the nearest lattice
sites perpendicular to the slice we've chosen. For example,
summing over edges perpendicular to the ray $y = 0$, $x > 0$ 
gives the lattice analog of the integral (\ref{eq:continuum-current-int})---this
construction is pictured in Figure~\ref{fig:lattice-slice}.

\begin{figure}[H]
    \centering
    \begin{tikzpicture}[x=0.68cm, y=0.68cm, >=stealth, line cap=round, line join=round,
        bond/.style={gray!45, line width=0.5pt},
        site/.style={circle, fill=gray!30!black, inner sep=0pt, minimum size=2.6pt},
        cut/.style={red!70!black, very thick, ->, shorten >=1.5pt, shorten <=1.5pt},
        nlab/.style={font=\footnotesize}]
        \fill[red!18] (-0.5,-0.5) rectangle (0.5,0.5);
        \foreach \x in {-3.5,-2.5,...,2.5}
            \foreach \y in {-3.5,-2.5,...,3.5}{
                \pgfmathsetmacro{\ok}{(\x*\x+\y*\y<=16 && (\x+1)*(\x+1)+\y*\y<=16) ? 1 : 0}
                \ifdim\ok pt>0.5pt
                    \draw[bond] (\x,\y) -- (\x+1,\y);
                \fi
            }
        \foreach \x in {-3.5,-2.5,...,3.5}
            \foreach \y in {-3.5,-2.5,...,2.5}{
                \pgfmathsetmacro{\ok}{(\x*\x+\y*\y<=16 && \x*\x+(\y+1)*(\y+1)<=16) ? 1 : 0}
                \ifdim\ok pt>0.5pt
                    \draw[bond] (\x,\y) -- (\x,\y+1);
                \fi
            }
        \draw[gray!55, densely dashed] (0.30,0) -- (4.85,0)
            node[right, nlab, gray!45!black] {slice};
        \foreach \x in {0.5,1.5,2.5,3.5}
            \draw[cut] (\x,-0.5) -- (\x,0.5);
        \node[nlab, red!70!black] at (3.0,0.30) {$I_{ij}$};
        \foreach \x in {-3.5,-2.5,...,3.5}
            \foreach \y in {-3.5,-2.5,...,3.5}{
                \pgfmathsetmacro{\ok}{(\x*\x+\y*\y<=16) ? 1 : 0}
                \ifdim\ok pt>0.5pt
                    \node[site] at (\x,\y) {};
                \fi
            }
        \node[nlab, red!55!black] at (0,0) {$\Phi$};
        \draw[gray!40, densely dashed] (0,0) circle (4);
        \node[nlab, gray!45!black, fill=white, inner sep=1.5pt] at (135:4) {$R$};
    \end{tikzpicture}
    \caption{The physical setup on the lattice annulus,
    drawn for a disk of radius $R = 4$ in lattice units. The 
    AB solenoid threads the central plaquette with flux $\Phi$.
    The slice is the positive $x$-axis---the bonds crossing it 
    carry the currents contributing to $I_{\rm tot}$.}
    \label{fig:lattice-slice}
\end{figure}

Using a slice along the positive $x$-axis as in Figure~\ref{fig:lattice-slice},
we compute the total current $I_{\rm tot} = \sum_{ij} I_{ij}$, using the same lattice
regularizations and parameter sweeps above. Since the current
is a short-distance phenomenon, we consider the sum up to a finite radius
$R_c$. We take $R_c = 6.5$ in lattice units.

\begin{figure}[H]
    \centering
    \includegraphics[width=0.85\textwidth]{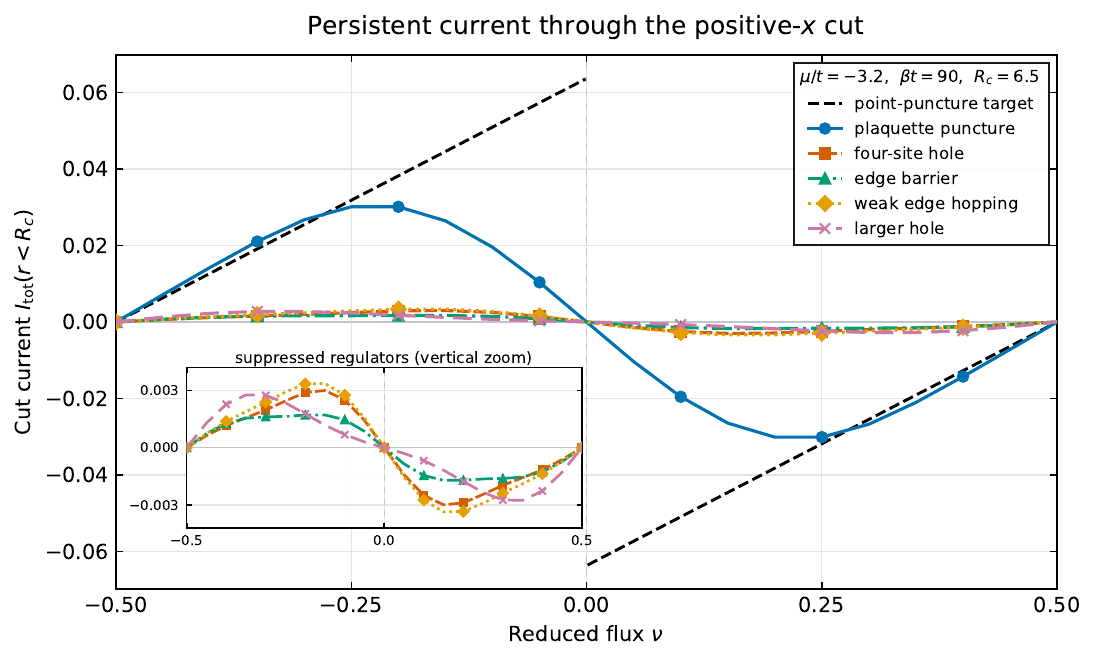}
    \caption{Total current $I_{\rm tot}(r < R_c)$ through the positive-$x$
    slice as a function of the reduced flux $\nu$, plotted for the five
    puncture regularizations at $\mu / t = - 3.2$ and $\beta t = 90$,
    with $R_c = 6.5$. The current is odd in $\nu$, as required from
    $\Omega_\beta(\nu, \mu) = \Omega_\beta(-\nu, \mu)$. As the inner
    annular radius becomes wider, the current value decreases.
    We can see that there is no persistent current at the origin
    and thus no associated current jump.}
    \label{fig:lattice-persistent-current}
\end{figure}

From inspection of Figure~\ref{fig:lattice-persistent-current}, it's clear that the UV regularization
destroys the persistent current and its associated current
jump. This fact holds for $195/200$ of the tested simulations.
The five simulations at $\mu = 0$ and $T = 0$ show very small persistent
currents, $|I(\nu \to 0^+)| \sim 8 \cdot 10^{-3}$, which is an order of magnitude
below the continuum value. Thus the continuum persistent current and its
current jump usually do not survive on the lattice, as was expected due to
their distinction as UV effects that depend on the details of the short-range
physics.

To check that this effect isn't only due to the outer boundary radius,
we also tested outer radii
\begin{align}
    b = 12, 16, 20, 24, 28, 32,
\end{align}
for $\mu/t = - 3.2$ for small values of $\nu$. The persistent current
still vanished in these cases. There is still the question of whether
or not the persistent current is present in the limit of $b \to \infty$.

\bibliographystyle{JHEP}
\bibliography{paperNotes}

\end{document}